\newcommand{\N}{\mathbb{N}}
\newcommand{\LTL}{\textbf{LTL}}
\newcommand{\X}{\textbf{X}}
\renewcommand{\G}{\textbf{G}}
\newcommand{\F}{\textbf{F}}
\newcommand{\U}{\textbf{U}}
\renewcommand{\exp}{\text{exp}}
\newcommand{\sem}[1]{\langle #1 \rangle}
\newcommand{\Card}{\text{Card}}
\newcommand{\last}{\textbf{last}}
\newcommand{\width}{\textbf{width}}
\newcommand{\RR}{\textbf{R}}
\newcommand{\ind}{\textbf{ind}}
\newtheorem{open}{Open problem}
\newcommand{\set}[1]{\left\{ #1 \right\}}
\newtheorem{theorem}{Theorem}
\newtheorem{lemma}{Lemma}
\begin{document}

\title{The Complexity of Learning \\Linear Temporal Formulas from Examples}
\author[1,2]{Nathana{\"e}l Fijalkow}
\author[1]{Guillaume Lagarde}
\affil[1]{CNRS, LaBRI, Universit{\'e} de Bordeaux, France}
\affil[2]{The Alan Turing Institute in London, United Kingdom}
\date{}

\maketitle

\begin{abstract}
In this paper we initiate the study of the computational complexity of learning linear temporal logic (LTL) formulas from examples.
We construct approximation algorithms for fragments of LTL and prove hardness results; 
in particular we obtain tight bounds for approximation of the fragment containing only the next operator and conjunctions,
and prove $\NP$-completeness results for many fragments.

\end{abstract}

\section{Introduction}
We are in this paper interested in the complexity of learning formulas of Linear Temporal Logic ($\LTL$) from examples,
in a passive scenario: from a set of positive and negative words, the objective is to construct a formula, as small as possible,
which satisfies the positive words and does not satisfy the negative words.

Passive learning of languages has a long history paved with negative results.
Learning automata is notoriously difficult from a theoretical perspective, 
as witnessed by the original $\NP$-hardness result of learning a Deterministic Finite Automaton (DFA) from examples~(\cite{Gold78}).
This line of hardness results culminates with the inapproximability result of~\cite{PittW93} stating that 
there is no polynomial time algorithm for learning a DFA from examples even up to a polynomial approximation of their size.

\vskip1em
One approach to cope with such hardness results is to change representation, for instance replacing automata by logical formulas; 
their syntactic structures make them more amenable to principled search algorithms.
There is a range of potential logical formalisms to choose from depending on the application domain.
Linear Temporal Logic~(\cite{Pnueli77}) is a prominent logic for specifying temporal properties over words,
it has become a de facto standard in many fields such as model checking, program analysis, and motion planning for robotics.
A key property making $\LTL$ a strong candidate as a concept class is that its syntax does not include variables, 
contributing to the fact that $\LTL$ formulas are typically easy to interpret and therefore useful as explanations.

\vskip1em
Over the past five to ten years learning temporal logics (of which $\LTL$ is the core) has become an active research area,
with applications in program specification~(\cite{LPB15}) and anomaly and fault detections~(\cite{BoVaPeBe-HSCC-2016}).
A number of different approaches have been proposed, leveraging SAT solvers~(\cite{NeiderG18}), automata~(\cite{Camacho_McIlraith_2019}),
and Bayesian inference~(\cite{ijcai2019-0776}), and extended to more expressive logics such as Property Specification Language (PSL)~(\cite{RoyFismanNeider20}) and Computational Tree Logic (CTL)~(\cite{EhlersGN20}).

\vskip1em
Very little is known about the computational complexity of the underlying problem;
indeed the works cited above focussed on constructing efficient algorithms for practical applications.
The goal of this paper is to initiate the study of the complexity of learning $\LTL$ formulas from examples.

\vskip1em
\textbf{Our contributions.}
We present a set of results for three fragments of $\LTL$.
For all three fragments we show that the learning problem is $\NP$-complete.
\begin{itemize}
	\item In Section~\ref{sec:X_and} we study $\LTL(\X,\wedge)$, which is the fragment containing only the next operator and conjunctions.
	We obtain matching upper and lower bounds on approximation algorithms: 
	we show that there exists a polynomial time $\log(n)$-approximation algorithm for learning $\LTL(\X,\wedge)$,
	and that the approximation ratio cannot be improved for polynomial time algorithms.
	\item In Section~\ref{sec:F_and} we study $\LTL(\F,\wedge)$, which is the fragment containing only the eventually operator and conjunctions.
	We construct an $n$-approximation algorithm 
	and show that there is no polynomial time $\log(n)$-approximation algorithm.
	\item In Section~\ref{sec:F_X_and_or} we study $\LTL(\F,\X,\wedge,\vee)$, which is the fragment containing the eventually and next operators,
	conjunctions and disjunctions.
\end{itemize}
We conclude in Section~\ref{sec:conclusions}, listing remaining open problems.

\section{Preliminaries}
Unless otherwise specified we use the alphabet $\Sigma = \set{a,b}$ of size $2$.
We index words from position $1$ (not $0$) and the letter at position $i$ in the word $w$ is $w(i)$,
so $w = w(1) \dots w(\ell)$.
The empty word is $\varepsilon$.

The syntax\footnote{$\LTL$ also includes an Until operator $\U$ extending both $\F$ and $\G$. In this paper we only consider fragments of $\LTL(\G, \F,\X,\wedge,\vee)$.} of Linear Temporal Logic ($\LTL$) includes atomic formulas $c \in \Sigma$, the boolean operators $\wedge$ and $\vee$,
and the temporal operators $\X, \F$, and $\G$.
The semantic of $\LTL$ over finite words is defined inductively over formulas,
through the notation $w,i \models \phi$ where $w \in \Sigma^*$ is a word of length $\ell$, $i \in [1,\ell]$ is a position in $w$, 
and $\phi$ an $\LTL$ formula.
The definition is given below for the atomic formulas and temporal operators $\X, \F$, and $\G$, with boolean operators interpreted as usual.
\begin{itemize}
    \item $w, i \models c$ if $w(i) = c$.
    \item $w, i \models \X \phi$ if $i < \ell$ and $w, i + 1 \models \phi$. It is called the ne$\X$t operator.
    \item $w, i \models \F \phi$ if $w, i' \models \phi$ for some $i' \in [i,\ell]$. It is called the e$\F$entually operator.
    \item $w, i \models \G \phi$ if $w, i' \models \phi$ for all $i' \in [i,\ell]$. It is called the $\G$lobally operator.
\end{itemize}
We then write $w \models \phi$ if $w,1 \models \phi$ and say that $w$ satisfies $\phi$.
We consider fragments of $\LTL$ by specifying which boolean connectives and temporal operators are allowed.
For instance $\LTL(\X,\wedge)$ is the set of all $\LTL$ formulas using only atomic formulas, conjunctions, and the next operator.
The full logic we consider here is $\LTL = \LTL(\F,\G,\X,\land,\lor)$.
The size of a formula is the size of its syntactic tree. 
We say that two formulas are equivalent if they have the same semantics.

\paragraph*{The $\LTL$ learning problem.}
The $\LTL$ learning decision problem is: 
\begin{framed}
\begin{tabular}{ll}
\textbf{INPUT}: & $u_1,\dots,u_n,v_1,\dots,v_m \in \Sigma^*$ and $k \in \N$,\\
\textbf{QUESTION}: & does there exist an $\LTL$ formula $\phi$ of size at most $k$ \\
& such that for all $j \in [1,n]$, we have $u_j \models \phi$, \\
& and for all $j \in [1,m]$, we have $v_j \not\models \phi$?
\end{tabular}
\end{framed}
In that case we say that $\phi$ separates $u_1,\dots,u_n$ from $v_1,\dots,v_m$,
or simply that $\phi$ is a separating formula if the words are clear from the context.
We call $u_1,\dots,u_n$ the positive words, and $v_1,\dots,v_m$ the negative words.
The $\LTL$ learning problem is analogously defined for any fragment of $\LTL$.

\paragraph*{Parameters for complexity analysis.}
Without loss of generality we can assume that $n = m$ (adding duplicate identical words to have an equal number of positive and negative words).
Therefore the three important parameters for the complexity of the $\LTL$ learning problem are:
$n$ the number of words, $\ell$ the maximum length of the words, and $k$ the desired size for the formula.

\paragraph*{Representation.}
The words given as input are represented in a natural way. 
We emphasise a subtelty on the representation of $k$: it can be given in binary (a standard assumption) or in unary.

In the first case, the input size is $O(n \cdot \ell + \log(k))$, 
so the formula $\phi$ we are looking for may be exponential in the input size!
Therefore it is not clear a priori that the $\LTL$ learning problem is in $\NP$.
Opting for a unary encoding, the input size becomes $O(n \cdot \ell + k)$, 
and in that case an easy argument shows that the $\LTL$ learning problem is in $\NP$.

%

We follow the standard representation: $k$ is given in binary, and therefore it is not immediate that the $\LTL$ learning problem is in $\NP$.

\paragraph*{Convention.}
Typically $i \in [1,\ell]$ is a position in a word and $j \in [1,n]$ is used for indexing words.

\paragraph*{A naive algorithm.}
Let us start our complexity analysis of the learning $\LTL$ problem by constructing a naive algorithm for the whole logic.
\begin{theorem}\label{thm:recursive_algorithm}
There exists an algorithm for learning $\LTL$ in time and space $O(\exp(k) \cdot n \cdot \ell)$,
where $\exp(k)$ is exponential in $k$.
\end{theorem}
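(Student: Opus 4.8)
The plan is to compute, by dynamic programming over increasing formula size, the exact semantics of every candidate formula on the input words, and then to scan the resulting table for a separating formula. The central object is the \emph{evaluation table} of a formula $\phi$: for each input word $w$ (there are $n + m = O(n)$ of them) and each position $i \in [1,|w|]$, the single bit recording whether $w, i \models \phi$. Since the total number of positions across all words is $O(n \cdot \ell)$, each such table occupies $O(n \cdot \ell)$ space. The algorithm will store one table per formula considered.

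First I would bound the number of formulas of size at most $k$. A formula is a syntax tree whose internal nodes carry one of the finitely many operators $\X, \F, \G$ (unary) and $\wedge, \vee$ (binary), and whose leaves carry one of the constantly many atoms in $\Sigma$. The number of tree shapes of size $k$ with node arities in $\set{0,1,2}$ is exponential in $k$, and there are only constantly many labels per node, so the total count of formulas is $\exp(k)$. Storing one evaluation table per formula thus gives the claimed $O(\exp(k) \cdot n \cdot \ell)$ space bound.

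Next I would fill the tables bottom-up. The base case (size $1$) is the atoms $c \in \Sigma$, whose table is read directly off the words. For the inductive step I build the table of each formula of size $s$ from the already-computed tables of its strict subformulas using the semantic clauses: $\wedge$ and $\vee$ are computed position-by-position and $\X$ by a one-position shift, each in $O(n \cdot \ell)$. The point requiring care is $\F$ and $\G$, since $w, i \models \F\psi$ depends on all positions $i' \geq i$, so a naive evaluation costs $O(\ell^2)$ per word and would break the target bound. The hard part will therefore be to keep each table within $O(n \cdot \ell)$: I would exploit the recurrences $w, i \models \F\psi \iff (w,i \models \psi) \text{ or } (i < |w| \text{ and } w, i+1 \models \F\psi)$, and dually for $\G$, which let me fill each table by a single backward sweep over each word in $O(\ell)$ time. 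Every formula's table then costs $O(n \cdot \ell)$, and since there are $\exp(k)$ formulas the total time is $\exp(k) \cdot O(n \cdot \ell)$; the polynomial-in-$k$ bookkeeping factors are absorbed into $\exp(k)$.

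Finally I would scan the tables of all formulas of size at most $k$ and, for each, test separation by reading only the position-$1$ bit of every word, accepting if some formula satisfies all positive words and no negative word. This costs $O(n)$ per formula and does not affect the overall $O(\exp(k) \cdot n \cdot \ell)$ time and space bound.
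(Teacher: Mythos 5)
Your proposal is correct and follows essentially the same approach as the paper: enumerate all formulas of size at most $k$ bottom-up, maintain for each one its evaluation table $\sem{\phi}$ on all words and positions (computable in $O(n \cdot \ell)$ per formula from the tables of its subformulas), and test separation by reading the position-$1$ bits. Your backward-sweep recurrence for $\F$ and $\G$ simply makes explicit a detail the paper asserts without elaboration.
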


Notice that the dependence of the algorithm presented in Theorem~\ref{thm:recursive_algorithm} is linear in $n$ and $\ell$, 
and it is exponential only in $k$, but since $k$ is represented in binary this is potentially a doubly-exponential algorithm.

\begin{proof}
For a formula $\phi \in \LTL$, we write $\sem{\phi} : \set{u_1,\dots,u_n,v_1,\dots,v_n} \to \set{0,1}^\ell$ for the function defined by
\[
\sem{\phi}(w)(i) = 
\begin{cases}
1 & \text{ if } w, i \models \phi, \\
0 & \text{ if } w, i \not\models \phi,
\end{cases}
\]
for $w \in \set{u_1,\dots,u_n,v_1,\dots,v_n}$.

Note that $\phi$ is separating if and only if $\sem{\phi}(u_j)(1) = 1$ and $\sem{\phi}(v_j)(1) = 0$ for all $j \in [1,n]$.
The algorithm simply consists in enumerating all formulas $\phi$ of $\LTL$ of size at most $k$ inductively,
constructing $\sem{\phi}$, and checking whether $\phi$ is separating.
Initially, we construct $\sem{a}$ and $\sem{b}$, and then once we have computed 
$\sem{\phi}$ and $\sem{\psi}$, we can compute $\sem{\phi \wedge \psi}$, $\sem{\phi \vee \psi}$,  $\sem{\X \phi}, \sem{\F \phi}$ and $\sem{\G \phi}$ in time $O(n \cdot \ell)$.
To conclude, we note that the number of formulas\footnote{The asymptotics can be obtained using classical techniques from Analytic Combinatorics~\cite{Flajolet08analyticcombinatorics}: the number of $\LTL$ formulas of size $k$ is asymptotically equivalent to 
$\frac{\sqrt{14} \cdot 7^k}{2 \sqrt{\pi k^3}}$.} of $\LTL$ of size at most $k$ is exponential in $k$.
\end{proof}

\paragraph*{Approximation algorithms.}
The goal of this paper is to understand the complexity of learning fragments of $\LTL$ and to construct efficient approximation algorithms.
An $\alpha$-approximation algorithm for learning $\LTL$ (or some fragment of $\LTL$) does the following:
the algorithm either determines that there are no separating formulas, 
or constructs a separating formula $\phi$ which has size at most $\alpha \cdot m$ with $m$ the size of a minimal separating formula.

\newpage
\section{$\LTL(\X,\wedge)$}
\label{sec:X_and}
\subsection*{Normalisation}
We first state and prove a normalisation lemma for formulas in $\LTL(\X,\wedge)$.

We define the class of ``patterns'' as formulas generated by the following grammar:
\[
P \doteq \X^i c \ \mid\ \X^i (c \wedge P) \text{ with $i \ge 0$ and $c \in \Sigma$}.
\]
Unravelling the definition we get the following general form for patterns:
\[
P = \X^{i_1 - 1}(c_1 \wedge \X^{i_2 - i_1}(\cdots \wedge \X^{i_p - i_{p-1}} c_p)\cdots),
\]
with $1 \le i_1 < i_2 < \dots < i_p$ and $c_1,\dots,c_p \in \Sigma$.
It is equivalent to the (larger in size) formula $\bigwedge_{q \in [1,p]} \X^{i_q - 1} c_q$, which states that
for each $q \in [1,p]$, the letter in position $i_q$ is $c_q$.

To determine the size of a pattern $P$ we look at two parameters: 
its last position $\last(P) = i_p$ and its width $\width(P) = p$.
The size of $P$ is $\last(P) + 2 (\width(P) - 1)$.
The two parameters of a pattern, last position and width, hint at the key trade-off we will have to face in learning $\LTL(\X,\wedge)$ formulas:
do we increase the last position, to reach further letters in the words, or the width, to further restrict the set of satisfying words?

\begin{lemma}\label{lem:normalisation_X_and}
For every formula $\phi \in \LTL(\X,\wedge)$ there exists an equivalent pattern of size smaller than or equal to $\phi$.
\end{lemma}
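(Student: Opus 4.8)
The plan is to prove this by structural induction on the formula $\phi \in \LTL(\X,\wedge)$, showing at each step that we can produce an equivalent pattern whose size does not exceed that of $\phi$. The atomic case is immediate: a single letter $c \in \Sigma$ is already a pattern $\X^0 c$ with $\last = 1$, $\width = 1$, and size $1$. The case of $\X\phi$ is also easy: by induction $\phi$ is equivalent to a pattern $P$, and $\X P$ is again a pattern (prepending one $\X$ shifts all positions up by one), of size exactly one more than $P$, matching the size increase of applying $\X$ to $\phi$. The real work is in the conjunction case.

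For $\phi = \phi_1 \wedge \phi_2$, the induction hypothesis gives equivalent patterns $P_1$ and $P_2$. Here I would first observe that the semantics of a pattern is precisely a set of positional constraints: $P$ with positions $i_1 < \dots < i_p$ and letters $c_1,\dots,c_p$ asserts ``$w(i_q) = c_q$ for each $q$.'' So $P_1 \wedge P_2$ asserts the union of the two constraint sets. The key step is to merge these two lists of (position, letter) constraints into a single sorted list. If the two patterns impose contradictory constraints at some position (say $w(i) = a$ and $w(i) = b$), then no word satisfies the conjunction, and I would need to produce a canonical unsatisfiable pattern --- here I should double-check that the definition permits this, or argue the contradictory case reduces to an equivalent satisfiable-looking but trivially handled pattern; this edge case is the first thing to watch.

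Assuming consistency, merging the sorted constraint lists and deleting duplicate identical constraints yields a new pattern $P$ whose position set is the union of those of $P_1$ and $P_2$. I then need the size bound. Writing $\last(P) = \max(\last(P_1),\last(P_2))$ and $\width(P) \le \width(P_1) + \width(P_2)$, and using the size formula $\last + 2(\width - 1)$, I would compare $\size(P)$ against $\size(\phi_1 \wedge \phi_2) = \size(P_1) + \size(P_2) + 1$ (the $+1$ for the $\wedge$ node). The main obstacle --- and the crux of the whole lemma --- is verifying this inequality, because the $\last$ terms do not add but take a maximum, which is exactly what creates the slack that pays for the extra $\wedge$ node and for any overlap in positions. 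Concretely, since $\last(P) = \max(\last(P_1),\last(P_2)) \le \last(P_1) + \last(P_2) - 1$ (each last position is at least $1$) and the widths add at most, one gets $\size(P) \le \last(P_1) + \last(P_2) - 1 + 2(\width(P_1) + \width(P_2) - 1) = \size(P_1) + \size(P_2) - 1 \le \size(\phi)$, with further savings whenever positions coincide. I expect the careful bookkeeping of these inequalities, especially handling the boundary contributions of the $-1$ terms and the deduplication of shared positions, to be the delicate part, while the conceptual reduction of patterns to positional constraints is what makes it all go through.
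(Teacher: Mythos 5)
Your proof follows essentially the same route as the paper's: structural induction on $\phi$, with all the work in the conjunction case, which you resolve by merging the two patterns' (position, letter) constraint lists into one sorted list --- the paper implements exactly this merge as a recursive case analysis on the heads $\X^{i_1}(c_1 \wedge P'_1)$ and $\X^{i_2}(c_2 \wedge P'_2)$, and handles your contradictory-constraints worry by noting that the grammar admits the unsatisfiable pattern $c_1 \wedge c_2$ with $c_1 \neq c_2$, of size $3 \le |\phi_1 \wedge \phi_2|$. One correction to your bookkeeping: from $\max(\last(P_1),\last(P_2)) \le \last(P_1)+\last(P_2)-1$ and $\width(P) \le \width(P_1)+\width(P_2)$ you get $|P| \le |P_1|+|P_2|+1$, not $|P_1|+|P_2|-1$ (merging $a$ with $\X b$ yields $a \wedge \X b$ of size $4 = 1 + 2 + 1$, already contradicting the claimed equality); the conclusion is unaffected because the $\wedge$ node supplies exactly that extra unit, giving $|P| \le |P_1|+|P_2|+1 \le |\phi_1|+|\phi_2|+1 = |\phi|$. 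Your version is in fact more explicit about the size accounting than the paper, which leaves it implicit.
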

\begin{proof}
We proceed by induction on $\phi$.
\begin{itemize}
	\item Atomic formulas are already a special case of patterns.
	\item If $\phi = \X \phi'$, by induction hypothesis we get a pattern $P$ equivalent to $\phi'$,
	then $\X P$ is a pattern and equivalent to $\phi$. 
	\item If $\phi = \phi_1 \wedge \phi_2$, by induction hypothesis we get two patterns $P_1$ and $P_2$ equivalent to $\phi_1$ and $\phi_2$.
	We use the inductive definition for patterns to show that $P_1 \wedge P_2$ is equivalent to another pattern.
	We focus on the case $P_1 = \X^{i_1} (c_1 \wedge P'_1)$ and $P_2 = \X^{i_2} (c_2 \wedge P'_2)$, 
	the other cases are simpler instances of this one.

	There are two cases: $i_1 = i_2$ or $i_1 \neq i_2$.

	If $i_1 = i_2$, either $c_1 \neq c_2$ and then $P_1 \wedge P_2$ is equivalent to false, which is the pattern $c_1 \wedge c_2$,
	or $c_1 = c_2$, and then $P_1 \wedge P_2$ is equivalent to $\X^{i_1} (c_1 \wedge P'_1 \wedge P'_2)$.
	By induction hypothesis $P'_1 \wedge P'_2$ is equivalent to a pattern $P'$, so the pattern $\X^{i_1} (c_1 \wedge P')$
	is equivalent to $P_1 \wedge P_2$, hence to $\phi$.

	If $i_1 \neq i_2$, without loss of generality $i_1 < i_2$, 
	then $P_1 \wedge P_2$ is equivalent to 
	$\X^{i_1} (c_1 \wedge P'_1 \wedge \X^{i_2 - i_1} (c_2 \wedge P'_2))$.
	By induction hypothesis $P'_1 \wedge \X^{i_2 - i_1} (c_2 \wedge P'_2)$ is equivalent to a pattern $P'$, 
	so the pattern $\X^{i_1} (c_1 \wedge P')$.
	is equivalent to $P_1 \wedge P_2$, hence to $\phi$.
\end{itemize}
\end{proof}

The first simple corollary of Lemma~\ref{lem:normalisation_X_and} is a non-deterministic polynomial time algorithm.

\begin{theorem}\label{thm:ltl_X_and_NP}
The learning problem for $\LTL(\X,\land)$ is in $\NP$.
\end{theorem}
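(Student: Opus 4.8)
The plan is to exhibit, for any yes-instance, a polynomial-size witness that can be verified in polynomial time, so that we can guess it nondeterministically and check it. The key difficulty hidden in the problem statement is that $k$ is given in binary, so naively a minimal separating formula could be exponentially large in the input size; the whole point is to argue that whenever a separating $\LTL(\X,\wedge)$ formula exists at all, there is one whose size is polynomial in the \emph{input} (i.e.\ in $n$ and $\ell$), regardless of how large $k$ is.

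First I would invoke Lemma~\ref{lem:normalisation_X_and} to restrict attention to patterns: any separating formula can be replaced by an equivalent pattern of size no larger, so it suffices to reason about separating patterns. A pattern $P$ is determined by its constraint set, namely the finite collection of pairs $(i_q, c_q)$ asserting that position $i_q$ carries letter $c_q$. The crucial observation is that only positions within the words matter: a constraint $\X^{i-1} c$ with $i > \ell$ is never satisfied by any word of length at most $\ell$ (since $\X$ requires $i < \ell$ at each step), so any pattern mentioning a position beyond the maximum length $\ell$ is equivalent, on the given words, either to a pattern using only positions in $[1,\ell]$ or to $\mathtt{false}$.

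Next I would bound the width. Since a pattern specifies at most one letter per position and only positions in $[1,\ell]$ are relevant, the useful width is at most $\ell$, so $\last(P) \le \ell$ and $\width(P) \le \ell$. By the size formula $\last(P) + 2(\width(P) - 1)$, any relevant separating pattern has size at most $3\ell - 2 = O(\ell)$. Hence if a separating formula of size at most $k$ exists, there is one of size $O(\ell)$, which is polynomial in the input size. The certificate is therefore this small pattern, whose size we may bound by $\min(k, O(\ell))$; we guess it and, using the semantics computation as in the proof of Theorem~\ref{thm:recursive_algorithm}, verify in time $O(n \cdot \ell)$ that it satisfies every $u_j$ and rejects every $v_j$.

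The main obstacle to watch is the binary encoding of $k$: the nondeterministic machine must not attempt to guess a formula of size up to $k$ directly, since that could take exponentially many steps. The fix is exactly the size reduction above — we only ever guess a pattern of size $O(\ell)$ — together with the easy check that such a small pattern indeed has size at most $k$ (comparing an $O(\ell)$-bit number against the binary $k$ is cheap). I would present the argument in that order: normalise to a pattern, discard out-of-range positions, bound width and last position by $\ell$, conclude the polynomial size bound, and finally describe the guess-and-verify procedure.
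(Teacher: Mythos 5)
Your proposal is correct and follows essentially the same route as the paper: normalise to a pattern via Lemma~\ref{lem:normalisation_X_and}, bound its last position and width by $\ell$ so that its size is at most $3\ell-2 = O(\ell)$, and then guess and verify this polynomial-size certificate. Your additional remarks (discarding positions beyond $\ell$, and handling the binary encoding of $k$) make explicit two points the paper leaves implicit, but they do not change the argument.
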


\begin{proof}
Let $u_1,\dots,u_n,v_1,\dots,v_n$ a set of $2n$ words of length at most $\ell$.
Thanks to Lemma~\ref{lem:normalisation_X_and}, if there exists a separating formula $\phi$, then there exists 
a separating pattern of size no larger than $\phi$.
However patterns have polynomially bounded size: indeed both the last position and the width are at most $\ell$,
so the size of a pattern is at most $3\ell - 2 = O(\ell)$.

In other words, if there exists a separating formula, then there exists one of size linear in $\ell$.
A non-deterministic algorithm guesses such a formula and checks whether it is indeed separating in (deterministic) time $O(n \cdot \ell^2)$.
\end{proof}

\subsection*{An approximation algorithm}
\begin{theorem}\label{thm:approximation_algorithm_ltl_X_and}
There exists a $O(n \cdot \ell^2)$ time $\log(n)$-approximation algorithm for learning $\LTL(\X,\land)$.
\end{theorem}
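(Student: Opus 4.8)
The plan is to recognise the learning problem for $\LTL(\X,\wedge)$, via the normalisation of Lemma~\ref{lem:normalisation_X_and}, as a constrained instance of \textsc{Set Cover}, and to solve it greedily. First I would observe that a pattern is simply a conjunction of atomic position constraints: the formula $\bigwedge_q \X^{i_q-1} c_q$ holds on $w$ iff $|w| \ge i_q$ and $w(i_q)=c_q$ for every $q$. For all positive words to satisfy a single constraint $\X^{i-1}c$, every $u_j$ must have length at least $i$ and carry the same letter $c$ at position $i$; since $|\Sigma|=2$, at each position $i$ there is \emph{at most one} constraint consistent with the positive words, namely the one using their common letter $c_i^\star$ (when it exists). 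Hence the ground set of usable constraints has size at most $\ell$, one per position. A consistent constraint at position $i$ ``kills'' exactly the negative words in $S_i = \set{j : |v_j| < i \text{ or } v_j(i) \neq c_i^\star}$, and a pattern separates iff every constraint it uses is consistent and the union of the corresponding $S_i$ covers all of $\set{1,\dots,n}$.

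Next I would address the cost function. The size of a pattern is $\last + 2(\width - 1)$, which is not a sum of per-constraint costs but couples the width with the largest position used. To control the $\last$ term I would loop over the intended last position $L \in [1,\ell]$: for each $L$, run the standard greedy \textsc{Set Cover} restricted to the consistent constraints at positions $\le L$, build the resulting pattern, and record its size; the algorithm outputs the cheapest pattern found, or reports that no separating formula exists if even $L = \ell$ fails to cover all negatives. The latter is correct, since a negative word killed by no consistent constraint satisfies every consistent constraint and hence every pattern, so no separating $\LTL(\X,\wedge)$ formula can exist.

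For the approximation guarantee, let $P^\star$ be a minimal separating pattern, with $m^\star = L^\star + 2(w^\star - 1)$, $w^\star = \width(P^\star)$, $L^\star = \last(P^\star)$. At the iteration $L = L^\star$ the $w^\star$ constraints of $P^\star$ already form a cover using positions $\le L^\star$, so greedy returns a cover of width $\tilde w \le H_n \cdot w^\star$, where $H_n = O(\log n)$ is the $n$-th harmonic number and $n$ is the number of negatives; the pattern it builds has last position $\le L^\star$, hence size at most $L^\star + 2(H_n w^\star - 1)$. Using $L^\star \le m^\star$ and $2w^\star \le m^\star + 1$ (both following from $w^\star \ge 1$ and $L^\star \ge 1$), this is at most $(1 + 2H_n)\,m^\star = O(\log n)\cdot m^\star$, so the best pattern over all $L$ is within a $\log(n)$ factor of optimal.

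Finally, for the running time I would precompute each $c_i^\star$, each set $S_i$, and the incidence lists (negative word $\mapsto$ constraints killing it) in time $O(n \cdot \ell)$. Implemented with the usual bucketing of constraints by their current uncovered-coverage, a single greedy run over the positions $\le L$ costs $O(n \cdot \ell)$, and summing over the $\ell$ choices of $L$ yields the claimed $O(n \cdot \ell^2)$. The main obstacle is exactly the two-parameter cost $\last + 2(\width-1)$: a single position-agnostic greedy can be forced to pick a far-out position and inflate $\last$ arbitrarily relative to $m^\star$, so the outer loop over the last position is essential, and the crux of the argument is verifying that fixing $L = L^\star$ simultaneously keeps $\last$ controlled and lets the greedy guarantee bound the width.
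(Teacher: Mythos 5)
Your proposal is correct and follows essentially the same route as the paper: normalise to patterns via Lemma~\ref{lem:normalisation_X_and}, view each usable position as a set of negative words it eliminates, fix the last position by an outer loop over $\ell' \in [1,\ell]$, and run greedy set cover inside, with the same $O(n\cdot\ell)$ per-iteration and $O(n\cdot\ell^2)$ total analysis. Your accounting of how the two-parameter cost $\last + 2(\width-1)$ combines into an $O(\log n)$ factor is in fact slightly more explicit than the paper's.
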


\begin{algorithm}[ht]
\KwData{Words $u_1,\dots,u_n, v_1,\dots,v_n$ of length at most $\ell$.}

$X \leftarrow \set{i \in [1,\ell] : \exists c \in \Sigma, \forall j \in [1,n], u_j(i) = c}$

\For{$i \in X$}{
	$Y_i \leftarrow \set{j \in [1,n] : v_j(i) \neq u_1(i) = u_2(i) = \dots = u_n(i)}$
}

$I_0 \leftarrow \emptyset$

$C_0 \leftarrow \emptyset$

$x \leftarrow 0$

\Repeat{$C_x = [1,n] \text{ or } I_x = X$}{
	$i_x \leftarrow \text{argmax} \set{ \Card(Y_i \setminus C_x) : i \in X \setminus I_x}$ ;

	$I_{x+1} \leftarrow I_x \cup \set{i_x}$ ;

	$C_{x+1} \leftarrow C_x \cup Y_{i_x}$ ;

	$x \leftarrow x + 1$ ;
}

\If{$C_x = [1,n]$}{
	\Return{\text{The pattern corresponding to $I_x$}}
}
\Else{
	\Return{No separating formula}
}

\caption{\label{algo:greedy_algorithm} The greedy algorithm returning a $\log(n)$-approximation of a minimal separating $\LTL(\X,\wedge)$-formula
with last position $\ell$.}
\end{algorithm}

\begin{proof}
Let $u_1,\dots,u_n,v_1,\dots,v_n$ a set of $2n$ words of length at most $\ell$.
Thanks to Lemma~\ref{lem:normalisation_X_and} we are looking for a separating pattern:
\[
P = \X^{i_1 - 1}(c_1 \land \X^{i_2 - i_1}(\cdots \land \X^{i_p - i_{p-1}} c_p)\cdots).
\]
For a pattern $P$ we define $I(P) = \set{i_q \in [1,\ell] : q \in [1,p]}$.
Note that $\last(P) = \max I(P)$ and $\width(P) = \Card(I(P))$.

We define the set 
$X = \set{i \in [1,\ell] : \exists c \in \Sigma, \forall j \in [1,n], u_j(i) = c}$.
Note that $P$ satisfies $u_1,\dots,u_n$ if and only if $I(P) \subseteq X$.
Further, given $I \subseteq X$, we can construct a pattern $P$ such that $I(P) = I$ and $P$ satisfies $u_1,\dots,u_n$: 
we simply choose $c_q = u_1(i_q) = \dots = u_n(i_q)$ for $q \in [1,p]$.
We call $P$ the pattern corresponding to $I$.

Recall that the size of the pattern $P$ is $\last(P) + 2(\width(P) - 1)$.
This makes the task of minimising it difficult: 
there is a trade-off between minimising the last position $\last(P)$ and the width $\width(P)$.

Let us consider the following easier problem: 
construct a $\log(n)$-approximation of a minimal separating pattern with fixed last position.
Assuming we have such an algorithm, we obtain a $\log(n)$-approximation of a minimal separating pattern
by running the previous algorithm on prefixes of length $\ell'$ for each $\ell' \in [1,\ell]$.

\vskip1em
We now focus on the question of constructing a $\log(n)$-approximation of a minimal separating pattern with fixed last position.
We refer to Algorithm~\ref{algo:greedy_algorithm} for the pseudocode.
For a set $I$, we write $C_I = \bigcup \set{Y_i : i \in I}$: 
the pattern corresponding to $I$ does not satisfy $v_j$ if and only if $j \in C_I$.
In particular, the pattern corresponding to $I$ is separating if and only if $C_I = [1,n]$.
 
The algorithm constructs a set $I$ incrementally through the sequence $(I_x)_{x \ge 0}$, 
with the following easy invariant: for $x \ge 0$, we have $C_x = C_{I_x}$.
The algorithm is greedy: $I_x$ is augmented with $i \in X \setminus I_x$ 
maximising the number of words added to $C_x$ by adding $i$, which is the cardinality of $Y_i \setminus C_x$.

\vskip1em
We now prove that this yields a $\log(n)$-approximation algorithm.
Let $P_{\text{opt}}$ a minimal separating pattern with last position $\ell$,
inducing $I_{\text{opt}} = I(P_{\text{opt}}) \subseteq [1,\ell]$ of cardinal $m$.
Note that $C_{I_{\text{opt}}} = [1,n]$.

We let $n_x = n - |C_x|$ and show the following by induction on $x \ge 0$: 
\[
n_{x+1} \le n_x \cdot \left( 1 - \frac{1}{m} \right) = n_x \cdot \frac{m - 1}{m}.
\]

We claim that there exists $i \in X \setminus I_x$ such that $\Card(Y_i \setminus C_x) \ge \frac{n_x}{m}$.
Indeed, assume towards contradiction that for all $i \in X \setminus I_x$ we have $\Card(Y_i \setminus C_x) < \frac{n_x}{m}$,
then there are no sets $I$ of cardinal $m$ such that $C_I \supseteq [1,n] \setminus C_x$,
contradicting the existence of $I_{\text{opt}}$.
Thus there exists $i \in X \setminus I_x$ such that $\Card(Y_i \setminus C_x) \ge \frac{n_x}{m}$,
implying that the algorithm chooses such an $i$ and 
$n_{x + 1} \le n_x - \frac{n_x}{m} = n_x \cdot \left( 1 - \frac{1}{m} \right)$.

\vskip1em
The proved inequality implies $n_x \le n \cdot \left( 1 - \frac{1}{m} \right)^x$.
This quantity is less than $1$ for $x \ge \log(n) \cdot m$, implying that the algorithm
stops after at most $\log(n) \cdot m$ steps.
Consequently, the pattern corresponding to $I$ has size at most $\log(n) \cdot |P_{\text{opt}}|$,
completing the claim on approximation.

\vskip1em
A naive complexity analysis yields an implementation of Algorithm~\ref{algo:greedy_algorithm} running in time $O(n \cdot \ell)$,
leading to an overall complexity of $O(n \cdot \ell^2)$ by running Algorithm~\ref{algo:greedy_algorithm} 
on the prefixes of length $\ell'$ of $u_1,\dots,u_n,v_1,\dots,v_n$ for each $\ell' \in [1,\ell]$.
\end{proof}

\subsection*{Hardness results}
\begin{theorem}\label{thm:hardness_X_and}
The $\LTL(\X, \wedge)$ learning problem is $\NP$-hard, 
and there are no $(1 - o(1)) \cdot \log(n)$ polynomial time approximation algorithms unless $\P = \NP$,
even for a single positive word.
\end{theorem}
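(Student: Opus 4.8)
The plan is to reduce from \textsc{Set Cover}, whose optimisation version cannot be approximated within a factor $(1-o(1))\log N$, with $N$ the size of the universe, unless $\P = \NP$. Given a universe $U = \{1,\dots,N\}$ and sets $S_1,\dots,S_t \subseteq U$, I would use a \emph{single} positive word together with one negative word per element. Take $u = a^t$ as the unique positive word, and for each $e \in U$ a negative word $v_e$ of length $t$ with $v_e(i) = b$ if $e \in S_i$ and $v_e(i) = a$ otherwise. With a single positive word every position is admissible, so by Lemma~\ref{lem:normalisation_X_and} a separating formula is (equivalent to) a pattern, i.e. a choice of positions $I \subseteq [1,t]$ with the letters forced to those of $u$. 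Such a pattern rules out $v_e$ exactly when it uses some $i \in I$ with $v_e(i) \neq u(i)$, that is some $i \in I$ with $e \in S_i$. Hence the separating patterns correspond precisely to the index sets $I$ forming a set cover of $U$, with $\width(P) = \Card(I)$ the number of chosen sets.

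For $\NP$-hardness this almost suffices, but the size of a pattern is $\last(P) + 2(\width(P)-1)$ rather than its width, so I first neutralise the last-position term. I add one extra negative word $v_0$ agreeing with $u$ everywhere except at the final position $t$ (setting $S_t = \emptyset$ so this position distinguishes only $v_0$). Then every separating pattern must use position $t$, forcing $\last(P) = t$, and the size becomes the affine function $t + 2(\width(P)-1)$ of the cover size. Thus a separating formula of size at most $t + 2(m-1)$ exists iff $U$ has a cover of size at most $m$, giving $\NP$-hardness for a single positive word (duplicating $u$ to balance positive and negative words).

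For the inapproximability claim the forcing trick is counterproductive: an additive $\last$ term of order $t$ swamps the width and makes the \emph{size} ratio collapse to $1$. Instead I must guarantee that an optimal separating pattern has $\last(P) = (1+o(1))\,\width(P)$, i.e. that some optimal cover is realisable using only low-index positions. Since $\width(P) \le \last(P)$ always, the cheapest pattern for a cover $I$ has size $3\Card(I) - 2$, attained when $I$ occupies an initial block; conversely from any separating pattern of size $s$ one extracts a cover of size at most $(s+2)/3$. If the instance is arranged so that a minimum cover (of size $m^*$) lies within the first $(1+o(1))m^*$ positions, then the minimum pattern size is $(3+o(1))m^*$; dropping the forcing word, $n = N$ so $\log n = (1+o(1))\log N$, and a hypothetical $(1-\epsilon)\log(n)$-approximation would return a pattern of size $(1-\epsilon)(3+o(1))\,m^*\log N$, hence a cover of size at most $(1-\epsilon)(1+o(1))\,m^*\log N$ — contradicting the inapproximability of \textsc{Set Cover} once $\epsilon$ is fixed and $N$ is large. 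Note that the unavoidable constant $3$ appears in both the optimum and the extraction, and therefore cancels.

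The main obstacle is exactly this last point: controlling the last-position contribution so that the reduction preserves the \emph{multiplicative} logarithmic factor rather than merely a constant factor. The delicate step is to arrange, \emph{without} knowing the optimal cover, that some near-optimal cover sits among the first $O(m^*)$ positions; I would address this by exploiting the structure of the hard \textsc{Set Cover} instances — for example making high-index sets redundant, or reordering and duplicating sets so that a low-index near-optimal cover is guaranteed — so that $\last(P)$ never inflates the optimum. Everything else, namely the correspondence between patterns and covers and the bookkeeping relating $n$, $\ell = t$ and $m^*$, is routine once this trade-off is handled.
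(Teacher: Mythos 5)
Your reduction for the $\NP$-hardness half is the paper's reduction, down to the details: a single all-$a$ positive word, one negative word per universe element recording membership in the sets position by position, and one extra negative word differing from the positive word only in the last position, which forces every separating pattern to use that position and pins $\last(P)$. The paper proves exactly the correspondence you state (covers of size $k$ correspond to separating patterns of size $\ell + 2k + O(1)$, where $\ell$ is the number of sets), and this yields $\NP$-hardness. Up to that point the proposal and the paper coincide and your argument is fine.

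The gap is in the inapproximability half, and you have named it yourself without closing it. Your observation is correct: since the size of a pattern is $\last(P) + 2(\width(P)-1)$, the additive last-position term, of the order of the number of sets $t$, swamps the width term. In instances witnessing the $(1-o(1))\log N$ hardness of set cover one necessarily has $t \gg m^*$ (indeed $t \gtrsim m^* \log N$, since otherwise taking all sets is already a good approximation), so the affine correspondence turns a multiplicative $\log N$ gap on cover sizes into at most a constant-factor gap on formula sizes. Your proposed repair --- drop the forcing word and arrange the instance so that some near-optimal cover occupies an initial block of positions, so that the optimum pattern has size $(3+o(1))m^*$ --- is precisely the missing idea, but it is not carried out: one cannot reorder the sets to put an unknown optimal cover first, and making high-index sets redundant does not prevent the optimum from needing high-index sets. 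Since preserving the multiplicative factor is the entire content of the $(1-o(1))\log n$ claim, the proposal as written establishes only $\NP$-hardness. For comparison, the paper takes the route you rejected: it keeps the forcing word, proves the exact size correspondence, and reads the inapproximability off of it without separately addressing how the additive $\ell$ term interacts with the multiplicative gap; so the difficulty you isolated is genuine and is the one point on which a complete argument must say more than either your proposal or the paper's stated proof does.
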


Note that Theorem~\ref{thm:approximation_algorithm_ltl_X_and} and Theorem~\ref{thm:hardness_X_and} 
yield matching upper and lower bounds on approximation algorithms for learning $\LTL(\X,\land)$.

The hardness result stated in Theorem~\ref{thm:hardness_X_and} follows from a reduction to the set cover problem,
that we define now.
The set cover decision problem is: given $S_1,\dots,S_\ell$ subsets of $[1,n]$ and $k \in \N$,
does there exists $I \subseteq [1,\ell]$ of size at most $k$ such that $\bigcup_{i \in I} S_i = [1,n]$?
In that case we say that $I$ is a cover. 
An $\alpha$-approximation algorithm returns a cover of size at most $\alpha \cdot k$ where $k$ is the size of a minimal cover.
The following results form the state of the art for solving exact and approximate variants of the set cover problem.

\begin{theorem}[\cite{DinurS14}]
\label{thm:subset_cover}
The set cover problem is $\NP$-complete, 
and there are no $(1 - o(1)) \cdot \log(n)$ polynomial time approximation algorithms unless $\P = \NP$.
\end{theorem}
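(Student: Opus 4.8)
The statement combines an exact-hardness claim and a tight inapproximability claim, so I would prove them in two stages, since they rely on entirely different machinery.

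For the first stage, membership in $\NP$ is immediate: a candidate cover $I \subseteq [1,\ell]$ is a certificate of size polynomial in the input, and verifying both $|I| \le k$ and $\bigcup_{i \in I} S_i = [1,n]$ takes polynomial time. For $\NP$-hardness I would use the textbook reduction from \textsc{Vertex Cover}, which is itself $\NP$-complete: given a graph $G = (V,E)$, take $E$ as the universe and, for each vertex $v$, let $S_v$ be the set of edges incident to $v$; then $I \subseteq V$ covers $E$ if and only if $I$ is a vertex cover of $G$, so a cover of size at most $k$ exists exactly when $G$ has a vertex cover of size at most $k$. This settles $\NP$-completeness.

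The substance of the theorem is the $(1 - o(1)) \cdot \log n$ lower bound, which I would obtain by a gap-preserving reduction from a hard gap version of \textsc{Label Cover}, composed with a combinatorial gadget. First I would invoke the PCP theorem to get a two-prover projection game on a bipartite constraint graph over a label alphabet $\Sigma$, for which it is $\NP$-hard to distinguish instances of value $1$ from instances of value at most some small $\delta$. The gadget is a \emph{partition system}: a ground set of $m$ points carrying $L$ partitions, each into $d$ blocks, engineered so that the $d$ blocks of any single partition already cover the ground set, whereas any cover forced to take its blocks from pairwise-distinct partitions must use a number of blocks larger by a logarithmic factor. The reduction places one copy $B_e$ of the ground set on each edge $e$, making the universe $E \times B$; each pair $(u,a)$ of a vertex $u$ and a label $a$ yields a set covering, in every $B_e$ with $e$ incident to $u$, the block indexed by $a$, and the blocks are arranged so that the two sets chosen on an edge cover $B_e$ exactly when they satisfy the edge's projection constraint. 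Completeness then gives a cover of roughly one set per vertex from any satisfying labeling, whereas a small cover in the soundness case can be decoded into a labeling of large value, contradicting value at most $\delta$; balancing $m, L, d$ and $\delta$ against the universe size $n$ yields the factor $(1-o(1)) \log n$.

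The crux --- and what distinguishes \cite{DinurS14} from the earlier reductions of Lund--Yannakakis and Feige --- is obtaining a game whose soundness $\delta$ is small enough relative to the alphabet size while the whole construction stays \emph{polynomial} in $n$. Driving $\delta$ down is done by gap amplification through parallel repetition, but the number of repetitions inflates the alphabet, and the achievable factor depends delicately on the trade-off between this inflation and the decay of the value. Earlier work could only push the gap far enough in quasi-polynomial time, yielding hardness under weaker assumptions such as $\NP \not\subseteq \mathsf{DTIME}(n^{O(\log \log n)})$. The hard part, therefore, is an analytic parallel-repetition argument tight enough to make the reduction polynomial and hence to obtain the optimal constant under the standard assumption $\P \ne \NP$.
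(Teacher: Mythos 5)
The first thing to note is that the paper does not prove this statement at all: Theorem~\ref{thm:subset_cover} is imported as a black box from \cite{DinurS14}, and the paper's own work in this section is the reduction \emph{from} set cover to the $\LTL(\X,\wedge)$ learning problem, not the hardness of set cover itself. So there is no paper proof to compare against; judged on its own terms, your sketch is a faithful reconstruction of the proof from the literature. The $\NP$-completeness stage is fine (set cover is one of Karp's original problems, and your vertex-cover reduction is the standard one). For the inapproximability stage you correctly identify the architecture of Feige's reduction --- a projection (label cover) game composed with partition systems placed on the edges, completeness yielding roughly one set per vertex, soundness decoding any small cover into a labeling of large value --- and, crucially, you put your finger on exactly what \cite{DinurS14} contributes: earlier gap amplification via parallel repetition forced a quasi-polynomial blow-up relative to the soundness needed, which is why Feige's threshold held only under assumptions like $\NP \not\subseteq \mathsf{DTIME}(n^{O(\log\log n)})$, and it is Dinur and Steurer's analytical parallel-repetition theorem for projection games that makes the whole reduction polynomial and the assumption $\P \neq \NP$.

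Two caveats. First, as written your text is an accurate survey of the proof rather than a proof: the quantitative heart --- the choice of partition-system parameters $(m,L,d)$, the probabilistic argument that such systems exist, the soundness decoding and its loss, and the tight repetition bound itself --- is asserted but not carried out, and that is where essentially all the difficulty lives. This is acceptable in context, since the theorem is a citation in the paper, but your sketch should be understood as an annotated pointer to \cite{DinurS14}, not a self-contained argument. Second, a precision issue you inherit from the paper's statement: the tight threshold is $(1-o(1))\ln n$; since the greedy algorithm achieves a $\ln n$-approximation, a $(1-o(1))\log_2 n$ lower bound would be outright false, so the $\log$ in the statement (and in your sketch) must be read as the natural logarithm.
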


\begin{proof}
We construct a reduction from set cover. 
Let $S_1,\dots,S_\ell$ subsets of $[1,n]$ and $k \in \N$.

Let us consider the word $u = a^{\ell + 1}$,
and for each $j \in [1,n]$ and $i \in [1,\ell]$, writing $v_j(i)$ for the $i$\textsuperscript{th} letter of $v_j$:
\[
v_j(i) = 
\begin{cases}
b \text{ if } j \in S_i, \\
a \text{ if } j \notin S_i,\\
\end{cases}
\]
and we set $v_{j}(\ell+1)=a$ for any $j \in [1,n]$. We also add $v_{n+1} = a^\ell b$.

We claim that there is a cover of size $k$ if and only if 
there is a formula of size $\ell + 2k - 1$ separating $u$ from $v_1,\dots,v_{n+1}$.

Thanks to Lemma~\ref{lem:normalisation_X_and} we can restrict our attention to patterns, \textit{i.e} formulas of the form (we adjust the indexing for technical convenience)
\[
\phi = \X^{i_1 - 1}(c_1 \wedge \X^{i_2 - i_1}(\cdots \wedge \X^{i_{p+1} - i_p} c_{p+1})\cdots),
\]
for some positions $i_1 \le \dots \le i_{p+1}$ and letters $c_1,\dots,c_{p+1} \in \Sigma$.
If $\phi$ satisfies $u$, then necessarily $c_1 = \dots = c_{p+1} = a$.
This implies that if $\phi$ does not satisfy $v_{n+1}$, then necessarily $i_{p+1} = \ell + 1$.

We associate to $\phi$ the set $I = \set{i_1 \le \dots \le i_p}$.
Note that $\phi$ is equivalent to $\bigwedge_{q \in [1,p]} \X^{i_q - 1} a \wedge \X^{\ell} a$, and the size of $\phi$ is $\ell + 1 + 2 (|I|-1)$.

By construction, $\phi$ separates $u$ from $v_1,\dots,v_{n+1}$ if and only if $I$ is a cover.
Indeed, $I$ is a cover if and only if for every $j \in [1,n]$ there exists $i \in I$ such that $j \in S_i$,
which is equivalent to 
for every $j \in [1,n]$ we have $v_j \not\models \phi$.
\end{proof}

\newpage
\section{$\LTL(\F,\wedge)$}
\label{sec:F_and}
As we will see, $\LTL(\F,\wedge)$ over an alphabet of size $2$ is very weak.
This degeneracy vanishes when considering alphabets of size at least $3$.
Let us fix a (finite) alphabet $\Sigma$.

\subsection*{Minimal formulas}
Instead of defining a normal form as we did for $\LTL(\X,\wedge)$ 
we characterise the expressive power of $\LTL(\F,\wedge)$ and construct for each property expressible in this logic a minimal formula.

Let us consider two words $u = u(1) \dots u(\ell')$ and $v = v(1) \dots v(\ell)$.
We say that $u$ is a subword of $v$ if there exists $\phi : [1,\ell'] \to [1,\ell]$ increasing such that $v(\phi(i)) = u(i)$,
and that $u$ is a factor of $v$ if $v = v_1 u v_2$ for two words $v_1,v_2$.
For example $abba$ is a subword of $b\textbf{ab}aaaa\textbf{b}\textbf{a}$, but not a factor,
and $bba$ is a factor of $ba\textbf{bba}a$.
We say that a word is non-repeating if every two consecutive letters are different.

\begin{lemma}\label{lem:characterisation_f_and}
For every formula $\phi \in \LTL(\F,\wedge)$,
either it is equivalent to false or there exists a finite set of non-repeating words $w_1,\dots,w_p$ and $c \in \Sigma \cup \set{\varepsilon}$
such that for every word $z$,
\[
z \models \phi \text{ if and only if }
\begin{cases}
\text{for all $q \in [1,p], w_q$ is a subword of $z$}, \\
\text{and $z$ starts with $c$}.
\end{cases}
\]
\end{lemma}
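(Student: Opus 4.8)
The plan is to proceed by structural induction on $\phi$, maintaining the invariant that a non-false formula is described by a pair consisting of a finite set of non-repeating subwords and a start constraint $c \in \Sigma \cup \set{\varepsilon}$. The first preliminary observation I would record is that for any $\phi \in \LTL(\F,\wedge)$ the truth of $z, i \models \phi$ depends only on the suffix $z(i)\cdots z(\ell)$, which I abbreviate $z[i..]$; this is an immediate sub-induction (atoms read position $i$, and both $\F$ and $\wedge$ only look forward), and it lets me read $z \models \F\phi'$ simply as ``some suffix of $z$ satisfies $\phi'$''.

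The base case is an atom $c$, which corresponds to the empty set of subwords together with start constraint $c$. For $\phi = \phi_1 \wedge \phi_2$, if either conjunct is equivalent to false then so is $\phi$; otherwise I take the union of the two subword sets and combine the start constraints: if both inner letters lie in $\Sigma$ and differ, the formula is false (a word cannot start with two distinct letters), and otherwise the combined constraint is the unique non-$\varepsilon$ letter among the two (or $\varepsilon$ if both are $\varepsilon$). Non-repetition is preserved since I only take unions.

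The real work is the case $\phi = \F\phi'$, and this is where I expect the main obstacle. If $\phi'$ is false then $\phi$ is false. If $\phi'$ has start constraint $\varepsilon$ and subwords $w_1,\dots,w_p$, then since the whole word is its own largest suffix, ``some suffix contains all the $w_q$'' is equivalent to ``$z$ contains all the $w_q$'', so $\F\phi'$ keeps the same subwords and start constraint $\varepsilon$. The delicate situation is when $\phi'$ carries a genuine start letter $c \in \Sigma$, so that satisfaction by the suffix $z[i..]$ requires simultaneously $z(i) = c$ and that every $w_q$ embeds into $z[i..]$. The key sub-lemma I would prove here is a decoupling: the existence of a single position $i$ with $z(i)=c$ from which all $w_q$ simultaneously embed is equivalent to the conjunction, over each $q$ separately, of the existence of such a position for $w_q$ alone. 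The nontrivial direction is reassembling independent witnesses $i_q$ into one; here I would take $i = \min_q i_q$, observing that $z(i) = c$ (it equals some $i_q$) and that each $w_q$, embedding into the shorter suffix $z[i_q..]$, a fortiori embeds into the longer suffix $z[i..]$. Each individual constraint then unravels to a single subword: it is ``$w_q$ is a subword of $z$'' when $w_q$ already starts with $c$ (take the witness at the first embedded position), and ``$c\, w_q$ is a subword of $z$'' otherwise (a $c$ must precede the embedding). Prepending $c$ in the second case preserves non-repetition precisely because $w_q$ does not start with $c$, and the degenerate case $p = 0$ is handled by adding the single-letter subword $c$. This yields for $\F\phi'$ the subword set obtained from the $w_q$ by this rule, together with start constraint $\varepsilon$, closing the induction.
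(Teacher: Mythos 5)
Your proof is correct and follows essentially the same route as the paper's: structural induction with the same treatment of conjunction and the same trick of prepending $c$ to each $w_q$ (when $w_q$ does not already start with $c$) in the $\F$ case. You are in fact slightly more careful than the paper's write-up, both in spelling out the decoupling argument via $i = \min_q i_q$ and in handling the degenerate case $p=0$ (e.g.\ $\F c$), which the paper's construction silently glosses over.
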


\begin{proof}
We proceed by induction over $\phi$.
\begin{itemize}
	\item For the atomic formula $c \in \Sigma$, the property is satisfied using the empty set of words and $c$.
	\item If $\phi = \F \phi'$, by induction hypothesis we get $w_1,\dots,w_p$ and $c$ for $\phi'$.
        We let $w'_i = cw_i$ if $w_i(1) \neq c$ and $w'_i = w_i$ otherwise, then $z \models \phi$ if and only if 
        for all $q \in [1,p]$, $w'_q$ is a subword of $z$ and $z$ starts with $\varepsilon$ (the latter condition is always satisfied).
	\item If $\phi = \phi_1 \wedge \phi_2$, by induction hypothesis we get $w^1_1,\dots,w^1_{p_1}, c_1$ for $\phi_1$
	and $w^2_1,\dots,w^2_{p_2}, c_2$ for $\phi_2$.
	There are two cases.
	If $c_1$ and $c_2$ are non-empty and $c_1 \neq c_2$ then $\phi$ is equivalent to false.
	Otherwise, either both are non-empty and equal or at least one is $\varepsilon$, say $c_2$.
	In both cases,
	$u \models \phi$ if and only if for all $(e,q) \in (1,[1,p_1]) \cup (2,[1,p_2])$, $w^e_q$ is a subword of $u$ and $u$ starts with $c_1$.
\end{itemize}
\end{proof}

Lemma~\ref{lem:characterisation_f_and} gives a characterisation of the properties expressible in $\LTL(\F,\wedge)$.
It implies that over an alphabet of size $2$ the fragment $\LTL(\F,\wedge)$ is very weak.
Indeed, there are very few non-repeating words over the alphabet $\Sigma = \set{a,b}$:
only prefixes of $abab \dots$ and $baba \dots$.
This implies that formulas in $\LTL(\F,\wedge)$ over $\Sigma = \set{a,b}$
can only place lower bounds on the number of alternations between $a$ and $b$ (starting from $a$ or from $b$)
and check whether the word starts with $a$ or $b$.
In particular, the $\LTL(\F,\wedge)$ learning problem over this alphabet is (almost) trivial and thus not interesting.
Hence we now assume that $\Sigma$ has size at least $3$.

\vskip1em
We move back from semantics to syntax, and show how to construct minimal formulas.
Let $w_1,\dots,w_p$ a finite set of non-repeating words and $c \in \Sigma \cup \set{\varepsilon}$, 
we define a formula $\phi$ as follows.

The set of prefixes of $w_1,\dots,w_p$ are organised in a forest (set of trees): 
a node is labelled by a prefix $w$ of some $w_1,\dots,w_p$,
and its children are the words $wc$ which are prefixes of some $w_1,\dots,w_p$.
The leaves are labelled by $w_1,\dots,w_p$.
We interpret each tree $t$ as a formula $\phi_t$ in $\LTL(\F,\wedge)$ as follows, in an inductive fashion: for $c \in \Sigma$,
if $t$ is labelled $w a$ with subtrees $t_1,\dots,t_q$, then
\[
\phi_t = \F( c \wedge \bigwedge_i \phi_{t_i}).
\]
If $c = \varepsilon$, the formula associated to $w_1,\dots,w_p$ and $c$ is the conjunction of the formulas for each tree of the forest,
and if $c \in \Sigma$, then the formula additionally has a conjunct $c$.

As an example, consider the set of words $ab, ac, bab$, and the letter $a$.
The forest corresponding to $ab, ac, bab$ contains two trees: 
one contains the nodes $b, ba, bab$, and the other one the nodes $a, ab, ac$.
The two corresponding formulas are
\[
\F (b \wedge \F(a \wedge \F b)) \qquad ; \qquad \F(a \wedge \F b \wedge \F c).
\]
And the formula corresponding to the set of words $ab, ac, bab$, and the letter $a$ is
\[
a\ \wedge \ \F (b \wedge \F(a \wedge \F b)) \ \wedge\ \F(a \wedge \F b \wedge \F c).
\]

\begin{lemma}\label{lem:minimal_f_and}
For every non-repeating words $w_1,\dots,w_p$ and $c \in \Sigma \cup \set{\varepsilon}$,
the formula $\phi$ constructed above is minimal, meaning there are no smaller equivalent formulas.
\end{lemma}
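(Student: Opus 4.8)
The plan is to prove minimality by establishing a tight lower bound on the size of any formula equivalent to $\phi$, matching exactly the size of the constructed formula. The natural strategy is a careful induction that simultaneously controls both the semantics (via Lemma~\ref{lem:characterisation_f_and}) and the syntactic size. Since Lemma~\ref{lem:characterisation_f_and} tells us that the semantics of any $\LTL(\F,\wedge)$ formula is captured by a set of required non-repeating subwords together with an optional starting letter, the key insight is that the forest construction is canonical: the set of prefixes of $w_1,\dots,w_p$ is determined by the semantics, and the forest merely shares common prefixes optimally. I would first compute the exact size of the constructed formula $\phi$ in terms of the forest: each internal node contributes one $\F$ and one atomic letter (cost $2$, plus conjunction symbols), each leaf contributes one $\F$ and one letter, and the optional leading conjunct $c$ contributes a fixed amount. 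This gives a precise formula for $|\phi|$ as a function of the number of nodes in the forest and the number of conjunctions.

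The core of the argument is the lower bound. I would take an arbitrary formula $\psi \in \LTL(\F,\wedge)$ equivalent to $\phi$ and argue that $|\psi| \ge |\phi|$. The cleanest route is to normalise $\psi$ by pushing structure around without increasing its size, until it has the shape of a forest-formula, and then argue that the canonical forest is the smallest such shape. Concretely, I would show that any $\LTL(\F,\wedge)$ formula can be rewritten, without increasing size, into a formula built from the grammar $\F(c \wedge \cdots)$ plus top-level conjunctions with an optional atomic conjunct, and that distinct subwords that do not share a prefix cannot be merged. The decisive step is a prefix-sharing argument: two required subwords $w_q$ and $w_{q'}$ sharing a common prefix $w$ can share the corresponding $\F$-$\wedge$ spine, and any equivalent formula must "pay" separately for every branching point in the forest, because collapsing two branches that differ would change the semantics (it would require, or fail to require, a subword not in the prescribed set).

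The main obstacle I anticipate is the lower bound's subword-counting argument: one must show that an equivalent formula cannot be cheaper by cleverly reusing temporal structure across what appear to be independent subword constraints. The danger is a formula that shares an $\F$ operator between two constraints in a way that the forest construction does not, thereby saving size. I would rule this out by arguing that the semantics forces the set of maximal subwords to be exactly $\{w_1,\dots,w_p\}$, and that any shared $\F(c \wedge \cdots)$ node corresponds to a genuine shared prefix — so the optimal amount of sharing is precisely captured by the prefix forest, which by construction shares the maximum possible. In other words, I would establish that the number of $\F$ operators in any equivalent formula is at least the number of forest nodes, and the number of atomic subformulas is likewise bounded below, by tracking how each letter-occurrence in each $w_q$ must be "witnessed" by some subformula of $\psi$.

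A complementary and perhaps more robust approach would be to set up an injective map from the nodes of the canonical forest into the subformulas of $\psi$, so that distinct nodes map to distinct $\F$-subformulas whose atomic labels match; the existence of such an injection immediately yields $|\psi| \ge |\phi|$. I expect building this injection — in particular proving it is well-defined and injective even when $\psi$ is not in normal form — to be where the real work lies, and I would handle it by first reducing to the normalised shape described above, where the injection becomes transparent.
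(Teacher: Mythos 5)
The paper states Lemma~\ref{lem:minimal_f_and} without any proof, so there is no official argument to measure yours against; your proposal has to stand on its own. As an outline it has the right shape (an exact size count for the forest formula, plus a lower bound obtained by normalising an arbitrary equivalent formula and injecting forest nodes into its subformulas), but it stops exactly where the difficulty begins. The two load-bearing claims --- that the semantics determines the set $\set{w_1,\dots,w_p}$ uniquely, and that any equivalent formula must ``pay'' separately for every node of the prefix forest --- are asserted rather than proved, and the second one \emph{is} the lemma: saying you would rule out clever sharing of $\F$-operators ``because any shared $\F(c\wedge\cdots)$ node corresponds to a genuine shared prefix'' restates the goal. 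A workable induction would have to strengthen the statement, e.g.\ prove that every $\psi$ in the fragment, with associated pair $(W,c)$ from Lemma~\ref{lem:characterisation_f_and}, has size at least that of the canonical forest formula of the reduced form of $(W,c)$; your sketch does not set up any such induction hypothesis, and the injection you propose is never constructed.

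More seriously, your canonicity claim is false as stated, and this breaks the whole plan. The semantics only determines the language (the set of $z$ such that every $w_q$ is a subword of $z$ and $z$ starts with $c$), not the presenting family: if some $w_q$ is a subword of another $w_{q'}$ --- which the construction permits, since the forest only forbids one word being a \emph{prefix} of another --- then the constraint for $w_q$ is redundant and the constructed formula is not minimal. The paper's own running example exhibits this: in $\set{ab, ac, bab}$ the word $ab$ is a subword of $bab$, so the conjunct $\F b$ inside $\F(a \wedge \F b \wedge \F c)$ can be deleted from $a \wedge \F(b \wedge \F(a \wedge \F b)) \wedge \F(a \wedge \F b \wedge \F c)$, yielding a strictly smaller equivalent formula. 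A correct proof must therefore either add a non-redundancy hypothesis on $w_1,\dots,w_p$ (no $w_q$ implied by the remaining constraints) or reduce the set first, and must then actually prove that reduced presentations of such subword languages are unique --- a nontrivial combinatorial fact about the subword order that your outline takes for granted. Until the redundancy issue and the node-counting lower bound are handled, the proposal is a plan, not a proof.
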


Applying the construction above to a single non-repeating word $w = c_1 \dots c_p$ we obtain what we call a ``\textbf{f}attern'' 
(pattern with an F):
\[
F = \F (c_1 \wedge \F(\cdots \wedge \F c_p)\cdots),
\]
We say that the non-repeating word $w$ induces the fattern $F$ above, 
and conversely that the fattern $F$ induces the word $w$.
The size of a fattern $F$ is $3 |w| - 1$. 
Adding the initial letter we obtain a grounded fattern $c \wedge F$,
in that case the letter $c$ is added at the beginning of $w$ and the size is $3 |w| - 2$.

\begin{lemma}\label{lem:normalisation_F_and}
Let $u_1,\dots,u_n,v_1,\dots,v_n$.
If there exists $\phi \in \LTL(\F,\wedge)$ separating $u_1,\dots,u_n$ from $v_1,\dots,v_n$,
then there exists a conjunction of at most $n$ fatterns separating $u_1,\dots,u_n$ from $v_1,\dots,v_n$.
\end{lemma}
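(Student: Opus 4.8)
The plan is to argue semantically, leaning entirely on the characterisation of $\LTL(\F,\wedge)$ from Lemma~\ref{lem:characterisation_f_and}, and then to introduce one conjunct per negative word. First I would observe that whenever $n \ge 1$ there is a positive word with $u_1 \models \phi$, so $\phi$ is not equivalent to false and Lemma~\ref{lem:characterisation_f_and} applies: there are non-repeating words $w_1,\dots,w_p$ and a symbol $c \in \Sigma \cup \set{\varepsilon}$ such that a word $z$ satisfies $\phi$ exactly when $z$ starts with $c$ and each $w_q$ is a subword of $z$ (the case $n=0$ being trivial). The two kinds of constraints, ``contains the subword $w_q$'' and ``starts with $c$'', are exactly what a fattern and a grounding atom express, which is what makes the reduction to a conjunction of fatterns natural.

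Next I would translate the separation hypothesis into local obstructions. Every positive word $u_i$ satisfies $\phi$, hence starts with $c$ and contains all the $w_q$ as subwords. Every negative word $v_j$ fails $\phi$, so at least one defining constraint is violated: either some fixed $w_{q_j}$ is not a subword of $v_j$, or $c \neq \varepsilon$ and $v_j$ does not start with $c$. For each $j \in [1,n]$ I would then pick a single conjunct witnessing this failure: in the first case the fattern $F_j$ induced by $w_{q_j}$, and in the second case the grounding atom $c$. Taking the conjunction of these conjuncts over $j \in [1,n]$ and discarding duplicates yields a conjunction of at most $n$ fatterns, since the single grounding atom $c$ simultaneously witnesses every negative word of the second kind.

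It then remains to check that this conjunction separates. On the one hand, each conjunct is a constraint that every positive word meets: a fattern induced by some $w_{q_j}$ is satisfied by every $u_i$ because $w_{q_j}$ is a subword of $u_i$, and the grounding atom $c$ is satisfied by every $u_i$ because $u_i$ starts with $c$; hence all positive words satisfy the whole conjunction. On the other hand, each negative word $v_j$ fails the specific conjunct chosen for it, and therefore fails the conjunction. This gives the desired separating conjunction of at most $n$ fatterns.

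The delicate point, and the one I would flag explicitly, is the ``wrong initial letter'' case. A plain fattern only tests subword containment, and a negative word can contain every required subword $w_q$ yet start with the wrong letter; no ungrounded fattern detects this, since a subword common to all positive words may also occur (later) inside such a $v_j$. This is precisely why the grounding atom $c$ is needed as a conjunct, and why the statement must be read with fatterns allowed to be grounded. Fortunately one grounding atom handles all such negative words at once, so the bound of $n$ conjuncts is preserved, and the remaining verifications follow directly from the subword and prefix semantics of fatterns and grounded fatterns.
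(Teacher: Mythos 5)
Your proposal is correct and follows essentially the same route as the paper: invoke Lemma~\ref{lem:characterisation_f_and} to reduce $\phi$ to subword constraints plus an initial-letter constraint, pick one violated constraint per negative word, and conjoin the corresponding fatterns (together with the grounding atom $c$ when needed). Your explicit accounting for the ``wrong initial letter'' case is if anything slightly more careful than the paper's, which handles it by simply carrying $c$ along into the induced formula.
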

\begin{proof}
Thanks to Lemma~\ref{lem:characterisation_f_and}, to the separating formula $\phi$ 
we can associate a finite set of non-repeating words $w_1,\dots,w_p$ and $c \in \Sigma \cup \set{\varepsilon}$
such that for every word $z$,
\[
z \models \phi \text{ if and only if }
\begin{cases}
\text{for all $q \in [1,p], w_q$ is a subword of $z$}, \\
\text{and $z$ starts with $c$}.
\end{cases}
\]
Let $j \in [1,n]$, since $v_j$ does not satisfy $\phi$ either $v_j$ does not start with $c$
or for some $q \in [1,p]$ the word $w_q$ is not a subword of $u$.
For each $j \in [1,n]$ such that $v_j$ starts with $c$, we pick one $q_j \in [1,p]$ for which $w_{q_j}$ is not a subword of $v_j$,
and consider the set $\set{w_{q_j} : j \in [1,n]}$ together with $c \in \Sigma \cup \set{\varepsilon}$.
The formula induced by the construction above is a conjunction of at most $n$ fatterns and it separates $u_1,\dots,u_n$
from $v_1,\dots,v_n$.
\end{proof}

\vskip1em
A first corollary of Lemmas~\ref{lem:characterisation_f_and} and~\ref{lem:minimal_f_and} is a non-deterministic polynomial time algorithm.

\begin{theorem}\label{thm:ltl_F_and_NP}
The learning problem for $\LTL(\F,\land)$ is in $\NP$.
\end{theorem}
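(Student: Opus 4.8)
The plan is to mirror the proof of Theorem~\ref{thm:ltl_X_and_NP}: use the normalisation results to confine the search to formulas of polynomial size, which a non-deterministic machine can guess and then verify by evaluating semantics. The only genuine subtlety, exactly the one flagged in the preliminaries, is that $k$ is encoded in binary and may be exponential in the input size, so I cannot naively guess a formula of size $k$; I must instead exhibit a polynomial-size certificate whenever the answer is positive.

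First I would establish a polynomial bound on the size of a normalised separating formula. By Lemma~\ref{lem:characterisation_f_and}, any separating formula $\phi$ corresponds to non-repeating words $w_1,\dots,w_p$ and a letter $c \in \Sigma \cup \set{\varepsilon}$; since every positive word $u_j$ satisfies $\phi$, each $w_q$ is a subword of $u_1$ and hence has length at most $\ell$. Lemma~\ref{lem:normalisation_F_and} then guarantees that, whenever a separating formula exists at all, there is a separating conjunction of at most $n$ fatterns, each induced by one of these words. As a fattern on a word of length at most $\ell$ has size $3\ell - 1$, such a conjunction (with its at most $n-1$ conjunction symbols and optional grounding letter) has size at most $B := O(n \cdot \ell)$, which is polynomial in the input and, crucially, independent of $k$.

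Next I would set up the non-deterministic algorithm. It first computes $B$ and the quantity $\min(k, B)$; comparing the binary integer $k$ against the polynomially bounded $B$ costs only polynomial time. It then guesses the syntactic tree of an $\LTL(\F,\wedge)$ formula of size at most $\min(k, B)$, which is a polynomial-size object since $\min(k,B) \le B = O(n\ell)$, and can be written down in polynomial time. Finally it verifies that the guessed formula separates $u_1,\dots,u_n$ from $v_1,\dots,v_n$ by computing its semantics $\sem{\cdot}$ on all $2n$ words exactly as in Theorem~\ref{thm:recursive_algorithm}, in polynomial time.

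For correctness I would argue both directions. Soundness is immediate: an accepted guess is a separating formula of size at most $\min(k,B) \le k$, so the answer is positive. For completeness, suppose there is a separating formula $\phi$ with $|\phi| \le k$. If $k \le B$, then $\phi$ itself has size at most $k = \min(k,B)$ and is polynomially sized, so it can be guessed; if $k > B$, then Lemma~\ref{lem:normalisation_F_and} supplies a separating conjunction of fatterns of size at most $B = \min(k,B) < k$, which can be guessed. Either way a valid witness of size at most $\min(k,B)$ exists. The step demanding the most care is precisely this reconciliation of the binary encoding of $k$ with the need for a polynomial certificate; once the $O(n\ell)$ bound on normalised formulas is in hand, the remainder is routine.
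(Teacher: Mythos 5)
Your proof is correct and follows essentially the same route as the paper: invoke Lemma~\ref{lem:normalisation_F_and} to bound a normalised separating formula by $O(n \cdot \ell)$, then guess and verify. The only difference is that you make explicit the guess of a formula of size at most $\min(k, B)$ to reconcile the binary encoding of $k$ with a polynomial certificate, a point the paper leaves implicit.
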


\begin{proof}
Let $u_1,\dots,u_n,v_1,\dots,v_n$ a set of $2n$ words of length at most $\ell$.
Assume there exists a separating formula $\phi$, thanks to Lemma~\ref{lem:normalisation_F_and}
there exists a conjunction of at most $n$ fatterns separating $u_1,\dots,u_n$ from $v_1,\dots,v_n$.
However fatterns have polynomially bounded size: indeed 
the size of a fattern is at most $3 \ell - 1 = O(\ell)$.

In other words, if there exists a separating formula, then there exists one of size at most $O(n \cdot \ell)$.
A non-deterministic algorithm guesses such a formula and checks whether it is indeed separating in (deterministic) time $O(n^2 \cdot \ell)$.
\end{proof}

\subsection*{A dynamic programming algorithm}
Let us define an intermediate problem called shortest subword:
the input is $u_1,\dots,u_n,v_1,\dots,v_n$, 
and the goal is to find the shortest word $w$ such that
for all $j \in [1,n]$, $w$ is a subword of $u_j$ and not a subword of $v_j$.

Lemma~\ref{lem:normalisation_F_and} and Lemma~\ref{lem:normalisation_F_and_single_positive} imply that
learning $\LTL(\F,\wedge)$ in both cases of a single positive word and a single negative word
is equivalent to the shortest subword problem, since minimising the size of a flattern is equivalent to minimising the size of the word it induces.
In particular, this implies that the shortest subword problem is $\NP$-complete.
Let us construct an algorithm for solving the shortest subword problem 
and then discuss its consequences for learning $\LTL(\F,\wedge)$.

\begin{lemma}\label{lem:shortest_subword}
There exists an algorithm solving the shortest subword problem running in time 
$O(n \cdot (\min \set{2^{\ell}, \ell^{2n}} + |\Sigma| \cdot \ell))$.
\end{lemma}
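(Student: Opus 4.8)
The plan is to design a dynamic programming algorithm over the alphabet-position structure, where the two terms in the running-time bound correspond to two distinct strategies for tracking how a candidate subword $w$ embeds into the positive words $u_1,\dots,u_n$ and fails to embed into the negative words $v_1,\dots,v_n$. The central object I would maintain is a state describing, for each input word, how much of the candidate subword $w$ has already been matched as a subword. For a single word $u$ of length at most $\ell$, the relevant information after committing to a prefix of $w$ is the earliest position in $u$ up to which that prefix has been greedily embedded; this is because subword embedding is monotone, so greedy leftmost matching is optimal. Thus the state is a tuple of $2n$ positions (one per word), but I would exploit that for the positive words we want embeddability and for the negative words we want non-embeddability, so the acceptance condition is: all positive pointers reach a valid end and all negative pointers fail to complete.

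First I would set up the recursion that builds $w$ one letter at a time. Given the current state (the vector of match-positions across all $2n$ words) and a proposed next letter $c \in \Sigma$, I can compute the successor state in time $O(n)$ by advancing each pointer to the next occurrence of $c$ (precomputable via a next-occurrence table costing $O(|\Sigma| \cdot \ell)$ per word, hence the $|\Sigma| \cdot \ell$ term). We seek the shortest sequence of letters reaching an accepting state, which is a shortest-path / BFS computation over the state graph. The two bounds arise from two encodings of the reachable states. For the $\ell^{2n}$ bound, I would simply enumerate states as vectors in $[0,\ell]^{2n}$, giving at most $(\ell+1)^{2n}$ states, each processed with $|\Sigma|$ transitions in time $O(n)$. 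For the $2^{\ell}$ bound, I would instead observe that what truly matters for each word is reachability, and collapse the state of all words sharing common structure; more precisely, for a single word the set of positions reachable by embedding prefixes of $w$ can be summarised by a subset of $[1,\ell]$, and across the bounded-length regime the number of distinct reachable configurations is controlled by $2^{\ell}$ rather than $\ell^{2n}$, which wins when $\ell$ is small relative to $n$.

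The key steps, in order, are: (i) build the next-occurrence tables for all $2n$ words; (ii) define the state space and the accepting condition (all positive words embed $w$, no negative word embeds $w$); (iii) run a breadth-first search from the initial state, expanding by each letter of $\Sigma$, stopping at the first accepting state to guarantee the shortest $w$; (iv) analyse the number of states under each of the two encodings and take the minimum. The factor $n$ in the overall bound comes from the per-transition cost of updating all pointers.

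The hard part will be justifying the $2^{\ell}$ state bound cleanly, that is, arguing that despite having $2n$ words the number of genuinely distinct reachable states is bounded by $\min\{2^\ell, \ell^{2n}\}$ rather than naively $\ell^{2n}$, and verifying that the greedy leftmost-embedding invariant makes the successor function well-defined and monotone so that BFS indeed returns a globally shortest subword. I would need to confirm that collapsing states via reachable-position subsets does not lose any shortest solution, i.e. that two states with identical subword-continuation behaviour can be safely merged, and that this merging is what delivers the $2^{\ell}$ regime. The remainder is routine bookkeeping on the transition cost and the preprocessing.
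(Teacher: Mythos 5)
Your BFS over pointer tuples with greedy leftmost matching is essentially the paper's dynamic programming scheme and correctly delivers the $\ell^{2n}$ regime, the next-occurrence preprocessing, and the $O(n)$ transition cost (modulo an extra factor $|\Sigma|$ on the main term, since you branch over every letter of $\Sigma$ at each state while the stated bound has no such factor on the $\min$ term). The genuine gap is the $2^{\ell}$ regime, which you yourself flag as the hard part, and the route you sketch for it does not work: ``the set of positions reachable by embedding prefixes of $w$'' summarised as a subset of $[1,\ell]$ is a per-word object, so collapsing each of the $2n$ coordinates this way bounds the state space by $(2^{\ell})^{2n}$, not $2^{\ell}$; and merging states with identical continuation behaviour is precisely the quantity you would still need to bound, not an argument for the bound.

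The missing idea is to anchor the candidate word on $u_1$. Any feasible $w$ must be a subword of $u_1$, and $u_1$ has at most $2^{\ell}$ distinct subwords; since under greedy matching the entire pointer tuple is a function of the candidate prefix, the number of useful reachable states is at most $2^{\ell}$ once you prune states whose $u_1$-pointer has failed. The paper bakes this in syntactically: at a state whose first coordinate is $i_1$, the only two moves are ``take $u_1(i_1)$ as the next letter of $w$ and advance every pointer to the next occurrence of that letter'' or ``skip position $i_1$ of $u_1$''. This binary branching simultaneously yields the $2^{\ell}$ bound (the recursion enumerates subsets of positions of $u_1$) and eliminates the $|\Sigma|$ factor from the per-state cost. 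With that modification your argument goes through; without it, your proof establishes only the $O(n \cdot |\Sigma| \cdot \ell^{2n})$ part of the claim.
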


\begin{algorithm}[ht]
\KwData{Words $u_1,\dots,u_n, v_1,\dots,v_n$ of length at most $\ell$.}

\For{$(i_1,\dots,i_n)$}{
	$\RR(i_1, \dots, i_n, \omega, \dots, \omega) \leftarrow 0$ ;
}

\For{$\overline{i}$ with $i'_j = \omega$}{
	$\RR(\overline{i}) \leftarrow \omega$ ;
}

\For{$w \in \{u_1, \dots, v_n \}, c \in \Sigma, i$}{
	$\ind(w, c, i) \leftarrow \min \set{i' : w(i') = c \wedge i' \ge i}$
}

\For{$\overline{i} = (i_1, \dots i_n, i'_1, \dots, i'_n)$}{
	\For{$j \in [1,n]$}{
		$ni_j = \ind(u_j,u_1(i_1), i_j)$ ;
	
		$ni'_j = \ind(v_j,u_1(i_1), i'_j)$ ;
	}
	
	$x \leftarrow \RR(i_1 + 1, ni_2,  \dots, ni_n, ni'_1,\dots, ni'_n)$ ;
	
	$y \leftarrow \RR(i_1 + 1, i_2, \dots, i_n, i'_1, \dots, i'_n))$ ;

	$\RR(\overline{i}) \leftarrow \min(1 + x, y)$ ;	
}

\Return{$\RR(1,\dots,1)$}

\caption{\label{algo:dynamic_algorithm} The dynamic programming algorithm solving the shortest subword problem.}
\end{algorithm}

We use Python-inspired notations for suffixes: 
we let $w(k:)$ denote the word obtained from $w$ starting at position $k$.

Let us write $\overline{i} = (i_1,i_2,\dots,i_n, i'_1,i'_2,\dots,i'_n)$ for a tuple of positions in each of the $2n$ words.
We include for each word the special position $\omega$.
Let $\RR(\overline{i})$ be the length of a shortest word $w$ such that 
for all $j \in [1,n]$, $w$ is a subword of $u_j(i_j:)$ and not a subword of $v_j(i'_j:)$.
We construct a dynamic programming algorithm populating the table $\RR$;
the goal is to compute $\RR(1,1,\dots,1)$. 
The pseudocode is given in Algorithm~\ref{algo:dynamic_algorithm};
we note that it only computes the length of a shortest word, not a word itself.
The algorithm can be easily adapted to output such a word using classical techniques for dynamic programming algorithms.

\begin{proof}
The key equality on which Algorithm~\ref{algo:dynamic_algorithm} relies is
\[
\RR(\overline{i}) = \min
\begin{cases}
1 + \RR(i_1 + 1, ni_2,  \dots, ni_n, ni'_1,\dots, ni'_n)\\
\RR(i_1 + 1, i_2, \dots, i_n, i'_1, \dots, i'_n)
\end{cases},
\]	
where $ni_j = \ind(u_j,u_1(i_1), i_j)$ and $ni'_j = \ind(v_j,u_1(i_1), i'_j)$.
It corresponds to the following case distinction:
we consider the shortest subword $w$ from $\overline{i}$ together 
with the functions $\phi_i$ mapping $w$ to each $u_1,\dots,u_n,v_1,\dots,v_n$.
Then
\begin{itemize}
	\item either $\phi_1(1) = i_1$, and then necessarily $\phi_i(1) \ge ni_i$ for $i \in [2,n]$ and $\phi_{i'}(1) \ge ni'_i$,
	so $w(2:)$ is the shortest subword starting from 
	\[
	(i_1 + 1, ni_2,  \dots, ni_n, ni'_1,\dots, ni'_n),
	\]
	\item or $\phi_1(1) > i_1$, and then $w$ 
	is the shortest subword starting from 
	\[
	(i_1 + 1, i_2, \dots, i_n, i'_1, \dots, i'_n).
	\]
\end{itemize}

Complexity analysis.  There are at most $2^{\ell}$ subwords, and at most $\ell^{2n}$ tuples;
both give an upper bound on the number of iterations.
Processing each is done in time $O(n)$ since
we need to query the values $\ind(u_j,u_1(i_1), i_j)$ and $\ind(v_j,u_1(i_1), i'_j)$ for $j\in [1,n]$.  
The naive algorithm to compute all the values $\ind(w,c,i)$ runs in time $O(n \cdot |\Sigma| \cdot \ell^2)$ 
but this can be easily reduced to a running time of $O(n\cdot |\Sigma| \cdot \ell)$.
\end{proof}

We now show how to instantiate Algorithm~\ref{algo:dynamic_algorithm} for learning $\LTL(\F,\wedge)$.

\begin{theorem}\label{thm:algorithm_f_and}
\hfill
\begin{itemize}
	\item There exists a $O(n \cdot (\min \set{2^{\ell}, \ell^{n}} + |\Sigma| \cdot \ell))$ time algorithm for learning $\LTL(\F,\wedge)$ with a single negative word.
	\item There exists a $O(n \cdot (\min \set{2^{\ell}, \ell^{n}}  + |\Sigma| \cdot \ell))$ time algorithm for learning $\LTL(\F,\wedge)$ with a single positive word.
	\item There exists a $O(n^2 \cdot (\min \set{2^{\ell}, \ell^{n}} + |\Sigma| \cdot \ell)))$ time $n$-approximation algorithm for learning $\LTL(\F,\wedge)$.
\end{itemize}
\end{theorem}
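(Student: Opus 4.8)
The plan is to derive all three items from the shortest subword algorithm (Algorithm~\ref{algo:dynamic_algorithm}, whose correctness and running time are Lemma~\ref{lem:shortest_subword}), using the normalisation lemmas to translate freely between fatterns and the words they induce. The two exact algorithms are the specialisations where one side consists of a single word, and the approximation algorithm runs the single-negative-word algorithm once per negative word.

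For the first item (single negative word $v$) I would invoke Lemma~\ref{lem:normalisation_F_and}: a separating formula, when one exists, can be taken to be a conjunction of at most one fattern together with a grounding letter. Concretely, if a conjunction $\bigwedge_q F_{w_q}$ separates then $v$ already fails one conjunct $F_{w_q}$, and that single $w_q$ — necessarily a common subword of all positives — yields a single separating fattern. Minimising such a fattern is, by construction, minimising the length of the word it induces, i.e. finding the shortest $w$ that is a subword of every $u_j$ and not a subword of $v$: exactly the shortest subword problem with one negative word, solved by Algorithm~\ref{algo:dynamic_algorithm}. Its running time then specialises to $O(n \cdot (\min \set{2^{\ell}, \ell^{n}} + |\Sigma| \cdot \ell))$, since only the $n$ positive words and the single negative word contribute position coordinates to the table $\RR$. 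I would separately dispose of the cheap grounding case — if all positives share a first letter $c$ while $v$ does not, the atomic formula $c$ of size $1$ already separates — and return the smaller candidate. The second item (single positive word) is symmetric, with Lemma~\ref{lem:normalisation_F_and_single_positive} replacing Lemma~\ref{lem:normalisation_F_and} to certify that a single fattern is minimal, and Algorithm~\ref{algo:dynamic_algorithm} instantiated with one positive word.

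For the third item I would build the $n$-approximation by decomposing the instance into $n$ single-negative subproblems. For each negative word $v_j$, I run the first item on the positives $u_1,\dots,u_n$ against $v_j$ to obtain a shortest word $w_j$ that is a common subword of all positives but not a subword of $v_j$, and I output the conjunction $\bigwedge_j F_{w_j}$ of the induced fatterns, adding a single grounding conjunct $c$ to absorb those negatives excludable only by their first letter (declaring infeasibility if some negative resists both treatments). This formula separates: every positive satisfies each conjunct because each $w_j$ is a common subword of the positives, and each $v_j$ fails the conjunct built for it. The running time is $n$ invocations of the first item, hence $O(n^2 \cdot (\min \set{2^{\ell}, \ell^{n}} + |\Sigma| \cdot \ell))$.

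The main obstacle is the approximation analysis, namely bounding the produced formula by $n$ times the optimum. Let $\phi_{\mathrm{opt}}$ be a minimal separating formula, corresponding via Lemma~\ref{lem:characterisation_f_and} to non-repeating words $w^*_1,\dots,w^*_p$ and a first letter. Since $\phi_{\mathrm{opt}}$ accepts all positives, each $w^*_q$ is a common subword of the positives, and for each negative $v_j$ not excluded by the first-letter condition some $w^*_{q(j)}$ fails to be a subword of $v_j$; this $w^*_{q(j)}$ is a feasible solution of the $j$-th subproblem, so the shortest one satisfies $|w_j| \le |w^*_{q(j)}| \le \max_q |w^*_q|$. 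Each fattern I produce is therefore no larger than the fattern induced by the longest optimal word, which is in turn at most $|\phi_{\mathrm{opt}}|$; summing over the at most $n$ negatives and checking that the single grounding conjunct is absorbed — it is required only when $\phi_{\mathrm{opt}}$ is itself grounded, which gives the needed slack — yields $|\phi| \le n \cdot |\phi_{\mathrm{opt}}|$. The delicate points I would verify carefully are the role of the grounding conjunct both in the feasibility test (correctly reporting "no separating formula") and in the size bound, and the fact that all single-negative subproblems are solvable precisely when the global instance is separable.
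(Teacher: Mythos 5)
Your proposal is correct and follows essentially the same route as the paper: reduce both single-sided cases to the shortest subword problem via Lemmas~\ref{lem:normalisation_F_and} and~\ref{lem:normalisation_F_and_single_positive}, then obtain the $n$-approximation by solving one single-negative subproblem per negative word and conjoining the results. The only difference is your approximation-ratio argument, which detours through the characterisation of Lemma~\ref{lem:characterisation_f_and}; the paper gets the bound $|\phi_j| \le |\phi_{\mathrm{opt}}|$ in one line by observing that the global optimum is itself a feasible solution of each single-negative subproblem, which also handles the grounding conjunct without a separate case.
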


\begin{proof}
Let us first consider the case of a single negative word. 
Thanks to Lemma~\ref{lem:normalisation_F_and} we can restrict our attention to fatterns,
so in this case learning $\LTL(\F,\wedge)$ is equivalent to the shortest subword problem with a single negative word.
Instantiating Lemma~\ref{lem:shortest_subword} we obtain
a $O(n \cdot (\min \set{2^{\ell}, \ell^{n+1}} + |\Sigma| \cdot \ell))$ time algorithm for learning $\LTL(\F,\wedge)$ with a single negative word.

The case of a single positive word is similar, 
but invoking Lemma~\ref{lem:normalisation_F_and_single_positive} instead of Lemma~\ref{lem:normalisation_F_and}.

\vskip1em
Let us now consider the general problem of learning $\LTL(\F,\wedge)$.
The algorithm is the following: for each $j \in [1,n]$ 
we run the algorithm for learning $\LTL(\F,\wedge)$ on a single negative word:
we construct a formula $\phi_j$ separating $u_1,\dots,u_n$ from $v_j$.
The algorithm then outputs the formula $\psi = \bigwedge_{j \in [1,n]} \phi_j$.

Indeed $\psi$ separates $u_1,\dots,u_n$ from $v_1,\dots,v_n$.
We now claim that $|\psi| \le n \cdot m$ where $m$ is the size of a minimal formula in $\LTL(\F,\wedge)$ 
separating $u_1,\dots,u_n$ from $v_1,\dots,v_n$.
Let $\phi$ such a formula, then for all $j \in [1,n]$ it also separates $u_1, \dots, u_n$ from $v_j$,
so $|\phi_j| \le |\phi|$, implying that $|\psi| \le n \cdot |\phi|$.
\end{proof}

\subsection*{Hardness results}
\begin{theorem}\label{thm:hardness_F_and}
The $\LTL(\F, \wedge)$ learning problem is $\NP$-hard, and there are no 
$(1 - o(1)) \cdot \log(n)$ polynomial time approximation algorithms unless $\P = \NP$,
even with a single positive word.
\end{theorem}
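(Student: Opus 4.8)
The plan is to reduce from the set cover problem, exploiting the equivalence—established through Lemma~\ref{lem:normalisation_F_and_single_positive}—between learning $\LTL(\F,\wedge)$ with a single positive word and the shortest subword problem. Given a set cover instance $S_1,\dots,S_\ell \subseteq [1,n]$, I would work over the alphabet $\Sigma = \set{a_1,\dots,a_\ell}$ (one letter per set, padding with dummy letters if $\ell < 3$), take as the single positive word $u = a_1 a_2 \cdots a_\ell$, and for each element $j \in [1,n]$ define the negative word $v_j$ to be the increasing concatenation of all letters $a_i$ with $j \notin S_i$. The key design principle is that, since every letter occurs exactly once in $u$, a word is a subword of $u$ if and only if it is a strictly increasing sequence of letters, hence encodes a subset $I \subseteq [1,\ell]$.

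The heart of the argument is the correspondence between coverage and subword-avoidance. For an increasing word $w$ encoding $I$, I would check that $w$ is a subword of $v_j$ exactly when $I \subseteq \set{i : j \notin S_i}$, i.e. when no chosen set contains $j$; equivalently, $w$ fails to be a subword of $v_j$ precisely when $j$ is covered by $I$. Since $w$ is automatically a subword of $u$, it follows that $w$ separates $u$ from $v_1,\dots,v_n$ if and only if $I$ is a cover, and moreover $|w| = \Card(I)$. By Lemma~\ref{lem:normalisation_F_and_single_positive} a minimal separating formula is a single fattern of size $3m - 1$, where $m$ is the length of a shortest separating subword; combining the two facts, $m$ equals the size of a minimum cover. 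This already yields $\NP$-hardness via Theorem~\ref{thm:subset_cover}.

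For inapproximability I would track the affine relation between formula size and cover size. An $\alpha$-approximation algorithm returns a separating formula of size at most $\alpha(3m-1)$; normalising it (again via Lemma~\ref{lem:normalisation_F_and_single_positive}) to a single fattern of word length $k$ gives $3k - 1 \le \alpha(3m-1)$, hence $k \le \alpha m$, so the induced subset is a cover of size at most $\alpha \cdot m$. Since the number of negative words is exactly the universe size $n$, a $(1 - o(1))\cdot\log(n)$-approximation for $\LTL(\F,\wedge)$ would give an approximation of the same asymptotic ratio for set cover, contradicting Theorem~\ref{thm:subset_cover}.

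The main obstacle I anticipate is not the reduction itself but verifying that the approximation guarantee survives the passage through the factor-$3$ (and additive-constant) relation between fattern size and encoded word length, so that the constant and the $o(1)$ term are genuinely preserved and the lower-bound parameter $n$ (number of negative words) coincides with the universe size governing the set cover threshold. A secondary point to handle cleanly is the use of a growing alphabet: the construction needs $|\Sigma| = \ell$, which is compatible with the standing assumption $|\Sigma| \ge 3$ but should be stated explicitly, together with the degenerate cases (an element lying in no set, for which no cover and no separating word exists).
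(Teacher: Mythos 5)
Your reduction is essentially the paper's: the paper reduces from the hitting set problem (the dual formulation of set cover), taking the alphabet $[0,\ell]$, the positive word $u = 0 1 \cdots \ell$, and $v_j$ the increasing enumeration of $\set{0} \cup ([1,\ell] \setminus C_j)$, and it invokes the same Lemma~\ref{lem:normalisation_F_and_single_positive}; under the translation $C_j = \set{i : j \in S_i}$ your words coincide with theirs except for the prepended letter $0$. That common first letter is the paper's device for neutralising the grounded fattern (initial-letter conjunct), so that the minimal separating formula has size exactly $3k-1$ for a minimum cover of size $k$, whereas in your variant a grounded fattern can save one unit of size and you must absorb this into the additive slack you already flag --- a harmless but real difference in bookkeeping.
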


The result follows from a reduction from the hitting set problem.
The hitting set decision problem is: given $C_1,\dots,C_n$ subsets of $[1,\ell]$ and $k \in \N$,
does there exist $H$ subset of $[1,\ell]$ of size at most $k$ such that for every $j \in [1,n]$ we have $H \cap C_j \neq \emptyset$.
In that case we say that $H$ is a hitting set.

The hitting set problem is an equivalent formulation of the set cover problem,
but it is here technically more convenient to construct a reduction from the hitting set problem.
The hardness results stated in Theorem~\ref{thm:subset_cover} apply to the hitting set problem.

\vskip1em
For proving the correction of the reduction we need a normalisation lemma specialised to the case of a single positive word.

\begin{lemma}\label{lem:normalisation_F_and_single_positive}
Let $u,v_1,\dots,v_n$.
If there exists $\phi \in \LTL(\F,\wedge)$ separating $u$ from $v_1,\dots,v_n$,
then there exists a fattern of size smaller than or equal to $\phi$ separating $u$ from $v_1,\dots,v_n$.
\end{lemma}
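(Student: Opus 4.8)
The plan is to strengthen the statement into a form amenable to the characterisation of Lemma~\ref{lem:characterisation_f_and}. Since $u \models \phi$, the formula $\phi$ is not equivalent to false, so Lemma~\ref{lem:characterisation_f_and} gives non-repeating words $w_1,\dots,w_p$ and $c \in \Sigma \cup \set{\varepsilon}$ such that $z \models \phi$ if and only if $z$ starts with $c$ and each $w_q$ is a subword of $z$. Applied to $u$, this means $u$ starts with $c$ and each $w_q$ is a subword of $u$. My goal is to merge the $w_q$ into a single non-repeating word $w^\ast$ that is still a subword of $u$ and contains every $w_q$ as a subword; the (grounded, if $c \neq \varepsilon$) fattern $F$ induced by $w^\ast$ will then satisfy $u$ and imply $\phi$, hence separate $u$ from $v_1,\dots,v_n$, because every $v_j$ that fails $\phi$ either does not start with $c$ or omits some $w_q$, and in the latter case it must omit $w^\ast$ as well since $w_q$ is a subword of $w^\ast$.

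To build $w^\ast$ I would use the leftmost (greedy) embedding of each $w_q$ into $u$: the position assigned to $w_q(t)$ is the leftmost occurrence of the letter $w_q(t)$ strictly after the position assigned to $w_q(t-1)$. The crucial observation is that this position depends only on the prefix $w_q(1)\cdots w_q(t)$, so the set $U$ of all positions of $u$ used across all these embeddings is exactly the image of a map from the set $\mathcal{P}$ of distinct nonempty prefixes of $w_1,\dots,w_p$. Reading off the letters of $u$ at the positions of $U$ in increasing order yields $w^\ast$, which is by construction a subword of $u$ containing each $w_q$ as a subword, and whose length satisfies $|w^\ast| = |U| \le |\mathcal{P}|$.

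Two points remain, and I expect them to be the main obstacle. First, I must check that $w^\ast$ is non-repeating. I would argue by contradiction: if two consecutive positions of $U$ carry the same letter of $u$, then tracing back the prefixes witnessing these two positions through the leftmost embedding forces some $w_q$ to have two equal consecutive letters, contradicting that the $w_q$ are non-repeating. Second, I must control the size exactly. By Lemma~\ref{lem:minimal_f_and} it suffices to compare the fattern with the minimal formula equivalent to $\phi$, which is built from the forest of prefixes of $w_1,\dots,w_p$; a direct count of $\F$-operators, atoms, and conjunctions shows this size equals $3|\mathcal{P}| - 1$ in the ungrounded case and $3|\mathcal{P}| + 1$ in the grounded case. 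Since the fattern induced by $w^\ast$ has size $3|w^\ast| - 1$ (respectively $3|w^\ast| + 1$ when grounded) and $|w^\ast| \le |\mathcal{P}|$, in both cases the fattern is no larger than the minimal formula, hence no larger than $\phi$. The delicate part is lining up these constants so that the inequality is exact rather than merely asymptotic, together with the non-repeating argument above and a separate check of the degenerate cases where $p = 0$ or $c = \varepsilon$.
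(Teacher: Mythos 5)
Your proof is correct and follows essentially the same route as the paper's: invoke Lemma~\ref{lem:characterisation_f_and} to get the non-repeating words $w_1,\dots,w_p$ and the letter $c$, merge them into a single subword of $u$ by taking the union of the positions used by embeddings into $u$, and output the induced fattern. You are in fact more careful than the paper on two points it glosses over: the paper uses arbitrary embeddings and only bounds $|w|$ by $\sum_q |w_q|$, which by itself does not yield the exact size comparison with $\phi$ (nor that the merged word is non-repeating), whereas your choice of leftmost embeddings, the bound $|U| \le |\mathcal{P}|$ against the prefix forest of Lemma~\ref{lem:minimal_f_and}, and the explicit non-repeating check close both gaps.
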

\begin{proof}
Thanks to Lemma~\ref{lem:characterisation_f_and}, to the separating formula $\phi$ 
we can associate a finite set of non-repeating words $w_1,\dots,w_p$ and $c \in \Sigma \cup \set{\varepsilon}$
such that for every word $z$,
\[
z \models \phi \text{ if and only if }
\begin{cases}
\text{for all $q \in [1,p], w_q$ is a subword of $z$}, \\
\text{and $z$ starts with $c$}.
\end{cases}
\]
Since $u$ satisfies $\phi$, it starts with $c$ and for all $q \in [1,p]$, $w_q$ is a subword of $u$.
For each $q \in [1,p]$ there exists $\phi_q$ mapping the positions of $w_q$ to $u$.
Let us write $w$ for the word obtained by considering all positions mapped by $\phi_q$ for $q \in [1,p]$.
By definition $w$ is a subword of $u$, and for all $q \in [1,p]$ $w_q$ is a subword of $w$.
It follows that the fattern induced by $w$ separates $u$ from $v_1,\dots,v_n$.
The size of $w$ is at most the sum of the sizes of the $w_q$ for $q \in [1,p]$,
hence the fattern induced by $w$ is smaller than the original formula $\phi$.
\end{proof}

We can now prove Theorem~\ref{thm:hardness_F_and}.

\begin{proof}
We construct a reduction from the hitting set problem.
Let $C_1,\dots,C_n$ subsets of $[1,\ell]$ and $k \in \N$.
Let us consider the alphabet $[0,\ell]$, we define the word $u = 0 1 2 \dots \ell$.
For each $j \in [1,n]$ we let $[1,\ell] \setminus C_j = \set{a_{j,1} < \dots < a_{j,m_j}}$, 
and define $v_j = 0 a_{j,1} \dots a_{j,m_j}$.

We claim that there exists a hitting set of size at most $k$ if and only if 
there exists a formula in $\LTL(\F,\wedge)$ of size at most $3k - 1$ separating $u$ from $v_1,\dots,v_n$.

\vskip1em
Let $H = \set{c_1,\dots,c_k}$ a hitting set of size $k$ with $c_1 < c_2 < \dots < c_k$,
we construct the (non-grounded) fattern induced by $w = c_1 \dots c_k$,
it separates $u$ from $v_1,\dots,v_n$ and has size $3k - 1$.

Conversely, let $\phi$ a formula in $\LTL(\F,\wedge)$ of size $3k - 1$ separating $u$ from $v_1,\dots,v_n$.
Thanks to Lemma~\ref{lem:normalisation_F_and_single_positive} we can assume that $\phi$ is a fattern,
let $w = c_1 \dots c_k$ the non-repeating word it induces.
Necessarily $c_1 < c_2 < \dots < c_k$.
If $\phi$ is grounded then $c_1 = 0$, but then the (non-grounded) fattern induced by $c_2 \dots c_k$
is also separating, so we can assume that $\phi$ is not grounded.
We let $H = \set{c_1,\dots,c_k}$, and argue that $H$ is a hitting set.
Indeed, $H$ is a hitting set if and only if 
for every $j \in [1,n]$ we have $H \cap C_j \neq \emptyset$,
which is equivalent to 
for every $j \in [1,n]$ we have $v_j \not\models \phi$;
indeed for $c_i \in H \cap C_j$ by definition $c_i$ does not appear in $v_j$ so $v_j \not\models \F c_i$.
\end{proof}

\newpage
\section{$\LTL(\F,\X,\wedge,\vee)$}
\label{sec:F_X_and_or}
\begin{theorem}\label{thm:ltl_F_X_and_or_NP}
The learning problem for $\LTL(\F,\X,\wedge,\vee)$ is in $\NP$.
\end{theorem}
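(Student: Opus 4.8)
The plan is to show membership in $\NP$ by establishing that whenever a separating formula exists in $\LTL(\F,\X,\wedge,\vee)$, there exists one of size polynomially bounded in the input parameters $n$ and $\ell$, so that a nondeterministic algorithm can guess it and verify it efficiently using the semantic evaluation procedure from Theorem~\ref{thm:recursive_algorithm}. The central difficulty, exactly as flagged in the preliminaries, is that $k$ is encoded in binary; a minimal separating formula could a priori be exponential in the input size, so the naive ``guess the formula of size $k$'' argument does not immediately place the problem in $\NP$. Thus the whole content of the proof is a \emph{small model property}: if any separating formula exists, a separating formula of size $O(\poly(n,\ell))$ exists. This is what the earlier membership proofs for $\LTL(\X,\wedge)$ (Theorem~\ref{thm:ltl_X_and_NP}) and $\LTL(\F,\wedge)$ (Theorem~\ref{thm:ltl_F_and_NP}) accomplished through normalisation lemmas, and I would aim to prove the analogous bound for the richer fragment $\LTL(\F,\X,\wedge,\vee)$.

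First I would observe that the semantics of any formula $\phi$ on the fixed input words depends only on the function $\sem{\phi} : \{u_1,\dots,u_n,v_1,\dots,v_n\} \to \{0,1\}^\ell$ introduced in Theorem~\ref{thm:recursive_algorithm}. There are only finitely many such functions --- at most $2^{2n\ell}$ of them --- and crucially the inductive rules for computing $\sem{\X\phi}$, $\sem{\F\phi}$, $\sem{\phi\wedge\psi}$, $\sem{\phi\vee\psi}$ show that the set of \emph{realisable} semantic profiles is closed under the operators of the fragment. This suggests the following argument: consider a minimal separating formula $\phi$ and look at its syntax tree; if any two distinct subformulas $\psi_1,\psi_2$ occurring along a root-to-leaf path satisfy $\sem{\psi_1} = \sem{\psi_2}$, then we could splice the smaller subtree in place of the larger one without changing $\sem{\phi}$, contradicting minimality. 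Hence on every root-to-leaf path the semantic profiles of the nested subformulas are pairwise distinct, which bounds the \emph{depth} of the tree by the number $2^{2n\ell}$ of distinct profiles --- but this only gives an exponential bound, which is not yet enough.

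The real work is to upgrade this depth argument into a bound on the total \emph{size} of a minimal formula. I would argue via a dynamic-programming / reachability viewpoint on the space of realisable profiles: define a profile to be reachable if it equals $\sem{\psi}$ for some $\psi$ in the fragment, and for each reachable profile record a witnessing formula of minimal size. Since atomic profiles $\sem{a},\sem{b}$ are reachable and each operator takes reachable profiles to reachable profiles in one step, a shortest-formula-per-profile table is well defined, and the minimal separating formula is the minimal-size witness of some \emph{separating} profile (one with value $1$ at position $1$ on every $u_j$ and $0$ at position $1$ on every $v_j$). The point I must establish is that the minimal witness size per profile is bounded polynomially; the natural route is a pumping-style exchange argument showing that a minimal formula cannot usefully repeat a profile across its structure, combined with a counting bound on how the size of a composite witness relates to the sizes of its immediate subwitnesses. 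The main obstacle I anticipate is precisely making this size bound polynomial rather than merely exponential: the closure argument easily shows \emph{existence} of a bounded-profile formula, but controlling the additive blow-up from nested $\wedge$ and $\vee$ (which can have many conjuncts/disjuncts) so that the final size stays in $O(\poly(n,\ell))$ is the delicate step, and it is likely where a careful normal-form or ``each conjunct/disjunct contributes a fresh profile'' invariant is needed. Once such a polynomial small-model bound is in hand, membership in $\NP$ follows immediately: guess a formula of the bounded size, and verify it is separating in time $O(\poly(n,\ell))$ by computing $\sem{\phi}$ bottom-up as in Theorem~\ref{thm:recursive_algorithm}.
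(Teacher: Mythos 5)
Your proposal correctly identifies the crux --- since $k$ is given in binary, membership in $\NP$ requires a polynomial small-model property --- but it does not actually establish that property, and the route you sketch is both much harder than necessary and left unfinished at exactly the critical step. Your profile-splicing argument bounds only the \emph{depth} of a minimal formula, and by a bound of $2^{2n\ell}$ at that; you acknowledge that upgrading this to a polynomial bound on total \emph{size} is ``the delicate step'' and defer it to an unspecified ``pumping-style exchange argument'' or normal form. That deferred step is the entire content of the theorem, so as written the proof has a genuine gap. It is also not clear the profile-based route can be pushed through cheaply: binary conjunctions and disjunctions can produce trees of small depth but exponentially many leaves, and nothing in your argument prevents a minimal witness for a given profile from being large.

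The idea you are missing is much simpler and is what the paper uses: because the fragment contains $\X$, $\wedge$, and $\vee$, one can \emph{explicitly write down} a separating formula of polynomial size whenever any separating formula exists, namely
\[
\bigvee_{j \in [1,n]} \bigwedge_{i \in [1,\ell]} \X^{i-1} u_j(i),
\]
which is satisfied exactly by the words agreeing with some positive word $u_j$. Provided no negative word coincides with a positive word (otherwise nothing separates them, which is checkable in polynomial time), this formula of size $O(n \cdot \ell^2)$ separates the sample. Hence if a separating formula of size at most $k$ exists, one of size $\min(k, O(n\cdot\ell^2))$ exists, and a nondeterministic algorithm can guess it and verify it by the bottom-up evaluation of Theorem~\ref{thm:recursive_algorithm}. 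No normalisation, profile analysis, or pumping argument is needed; the whole difficulty you anticipate dissolves once you notice the fragment can hard-code the positive words.
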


\begin{proof}
Let $u_1,\dots,u_n,v_1,\dots,v_n$ a set of $2n$ words all of length $\ell$.
We note that there always exist a separating formula:
\[
\bigvee_{j \in [1,n]} \bigwedge_{i \in [1,\ell]} \X^{i-1} u_j(i).
\]
This formula\footnote{The formula can be factorised to yield a formula of size $O(n \cdot \ell)$.} has size $O(n \cdot \ell^2)$,
which is polynomial in the size of the input.
A non-deterministic algorithm guesses such a formula of size at most $O(n \cdot \ell^2)$ 
and checks whether it is indeed separating in (deterministic) time $O(n^2 \cdot \ell^3)$.
\end{proof}

We note that the argument applies to any fragment containing $\X,\wedge$, and $\vee$;
in particular this shows that the learning problem for $\LTL = \LTL(\G,\F,\X,\wedge,\vee)$ is in $\NP$.

\subsection*{Hardness result}

We show that the reduction constructed in Section~\ref{sec:X_and} extends to $\LTL(\F, \X,\wedge,\vee)$.

\begin{theorem}\label{thm:hardness_F_X_and_or}
The $\LTL(\F, \X, \wedge, \vee)$ learning problem is $\NP$-hard, and there are no 
$(1 - o(1)) \cdot \log(n)$ polynomial time approximation algorithms unless $\P = \NP$,
even for a single positive word.
\end{theorem}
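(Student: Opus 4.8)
The statement to prove is Theorem~\ref{thm:hardness_F_X_and_or}: that learning $\LTL(\F,\X,\wedge,\vee)$ is $\NP$-hard and admits no $(1-o(1))\cdot\log(n)$ polynomial time approximation unless $\P=\NP$, even for a single positive word. The natural approach, as the prose preceding the statement explicitly announces, is to reuse the reduction from set cover already built in Section~\ref{sec:X_and} for $\LTL(\X,\wedge)$. Let me recall that construction: given $S_1,\dots,S_\ell \subseteq [1,n]$, we set the single positive word $u = a^{\ell+1}$, define $v_j(i) = b$ iff $j \in S_i$ (and $a$ otherwise) with $v_j(\ell+1)=a$, and add $v_{n+1} = a^\ell b$. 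In the $\LTL(\X,\wedge)$ analysis, a separating formula of size $\ell + 2k - 1$ corresponds exactly to a cover of size $k$. The plan is to show that enriching the logic with $\F$ and $\vee$ does not help the learner beat this bound, so the same instance witnesses hardness and inapproximability for the larger fragment.

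First I would verify that the \emph{same} reduction instance still works. The forward direction is immediate: a cover $I$ of size $k$ yields the $\LTL(\X,\wedge)$ pattern $\bigwedge_{i\in I}\X^{i-1}a \wedge \X^\ell a$ of size $\ell + 2k - 1$, which lives inside $\LTL(\F,\X,\wedge,\vee)$ and separates $u$ from all $v_j$. The real content is the converse: I must argue that any separating formula $\phi \in \LTL(\F,\X,\wedge,\vee)$ of size at most $\ell + 2k - 1$ forces a cover of size at most $k$. The key obstacle is that $\F$ and $\vee$ give the learner expressive power that the normalisation Lemma~\ref{lem:normalisation_X_and} does not address, so I cannot simply quote that lemma. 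I would therefore exploit the structure of the single positive word $u = a^{\ell+1}$: since $u$ contains only the letter $a$, whenever $\phi$ is built to satisfy $u$, every atomic subformula $b$ occurring positively is useless on $u$.

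The main technical step — and the step I expect to be the principal obstacle — is to show that disjunctions and eventually operators can be removed without increasing size while preserving separation on this particular instance. For the disjunctions, I would argue that if $\phi = \phi_1 \vee \phi_2$ separates, then since $u \models \phi$ we have $u \models \phi_1$ or $u \models \phi_2$; whichever disjunct $u$ satisfies, it also fails all the $v_j$ (because the whole disjunction does), so one disjunct alone already separates and is smaller — letting me push to a disjunction-free formula. For the eventually operators, I would use that over the word $u=a^{\ell+1}$, which is a constant string, $\F\psi$ evaluated at a position is equivalent to a bounded disjunction $\bigvee_{t\ge 0}\X^t\psi$; the subtlety is that on the negative words $v_j$ an $\F$ can reach the distinguishing $b$ in ways an $\X$-chain cannot, so I must check carefully that replacing $\F$ by the appropriate single $\X^t$ (the one witnessing satisfaction on $u$) still rejects every $v_j$. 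Concretely, I would show that a separating formula must ultimately certify, for each $j\in[1,n+1]$, a position $i$ with $v_j(i)=b$ while $u(i)=a$, and that the cheapest way to certify such positions is by a conjunction of $\X^{i-1}a$ constraints, recovering the cover. Here the conjunct forcing rejection of $v_{n+1}=a^\ell b$ pins the last position to $\ell+1$, exactly as in the original argument.

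Once the formula is normalised to the $\LTL(\X,\wedge)$ shape, the size accounting is identical to the proof of Theorem~\ref{thm:hardness_X_and}: a separating formula of size $\ell + 2k - 1$ yields a cover of size $k$, and conversely. This transfers both the $\NP$-hardness and the tight $(1-o(1))\cdot\log(n)$ inapproximability of set cover (Theorem~\ref{thm:subset_cover}) to the $\LTL(\F,\X,\wedge,\vee)$ learning problem, with the reduction using a single positive word throughout. The crux of the whole argument is thus the normalisation claim that $\F$ and $\vee$ confer no size advantage \emph{on this specific instance}; I expect verifying the $\F$-elimination step to require the most care, since it must hold simultaneously across all negative words while not inflating the formula size.
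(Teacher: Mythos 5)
Your overall strategy coincides with the paper's: reuse the set-cover instance from Theorem~\ref{thm:hardness_X_and} verbatim and show that on this specific instance a separating formula of $\LTL(\F,\X,\wedge,\vee)$ can be converted into an $\LTL(\X,\wedge)$ pattern of no larger size, first eliminating $\vee$ and then $\F$. Your disjunction-elimination sketch only treats a top-level $\vee$; to push disjunctions out from under conjunctions and temporal operators one needs the full induction of the paper's Lemma~\ref{lem:remove_or}, whose $\wedge$ case partitions the negative words between the two conjuncts and applies the induction hypothesis to each part (this is where the restriction to a single positive word is used). That part is recoverable from your outline.

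The genuine gap is the $\F$-elimination step, which you correctly flag as the crux but do not resolve, and the method you propose would fail. Replacing $\F\psi$ by $\X^t\psi$, where $t$ is the witness offset on $u$, costs $t + |\psi|$ against $1 + |\psi|$ for $\F\psi$, so for $t \ge 2$ the formula grows and the size accounting $\ell + 2k - 1 \leftrightarrow$ cover of size $k$ breaks down; monotonicity gives you rejection of the $v_j$ but not the size bound. The missing idea is that the auxiliary negative word $v_{n+1} = a^\ell b$ forces $t = 0$: since $u = a^{\ell+1}$ and $v_{n+1}$ agree on their first $\ell$ letters, one proves by induction that for every $\psi \in \LTL(\F,\X,\wedge)$, every $i \in [2,\ell+1]$ and every $i' < i$, $u,i \models \psi$ implies $v_{n+1},i' \models \psi$. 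Hence if $u,i \models \F\psi$ while $v_{n+1},i \not\models \F\psi$, the witness position on $u$ must be $i$ itself (a later witness would transfer back to $v_{n+1}$ at position $i$), so one may simply erase the $\F$ (map $\F\psi$ to $\psi$), which strictly decreases size and preserves separation. Without this observation, which is the entire content of the paper's ``three properties'' argument, the reduction does not close.
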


We prove that the reduction constructed in Theorem~\ref{thm:hardness_X_and} is also a reduction
from set cover to the $\LTL(\F, \X, \wedge, \vee)$ learning problem.

To prove this result we need a reduction lemma for disjunctions, that we state and prove now.
Let $\phi \in \LTL(\F,\X,\wedge,\vee)$, we define $D(\phi) \subseteq \LTL(\F,\X,\wedge)$ by induction:
\begin{itemize}
    \item If $\phi = c$ then $D(\phi) = \set{c}$.
    \item If $\phi = \phi_1 \wedge \phi_2$ then $D(\phi) = \set{\psi_1 \wedge \psi_2 : \psi_1 \in D(\phi_1), \psi_2 \in D(\phi_2)}$.
    \item If $\phi = \phi_1 \vee \phi_2$ then $D(\phi) = D(\phi_1) \cup D(\phi_2)$.
    \item If $\phi = \X \phi'$ then $D(\phi) = \set{\X \psi : \psi \in D(\phi')}$.
    \item If $\phi = \F \phi'$ then $D(\phi)  = \set{\F \psi : \psi \in D(\phi')}$.
\end{itemize}

\begin{lemma}\label{lem:remove_or}
For any $u,v_1,\dots,v_n$, 
if $\phi$ separates $u$ from $v_1,\dots,v_n$,
then there exists $\psi \in D(\phi)$ which separates $u$ from $v_1,\dots,v_n$.
\end{lemma}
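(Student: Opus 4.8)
The plan is to prove Lemma~\ref{lem:remove_or} by structural induction on $\phi$, carefully setting up the induction so that it is strong enough to survive the conjunction case. The statement as given is about separating a single positive word $u$ from several negative words, and this asymmetry is exactly what makes the disjunction case work: a disjunction $\phi_1 \vee \phi_2$ is satisfied by $u$ as soon as one disjunct is, so we can commit to that one disjunct and discard the other, whereas for conjunctions both conjuncts must hold on $u$ and we cannot discard either.

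First I would observe that, to make the induction go through for conjunctions, I need a strengthened hypothesis that tracks behaviour at arbitrary positions rather than only at position $1$. Concretely, I would prove the following for every word $w$: if $w,i \models \phi$, then there exists $\psi \in D(\phi)$ with $w,i \models \psi$; and dually, for every word $w$ and position $i$, if $w,i \models \psi$ for \emph{some} $\psi \in D(\phi)$, then $w,i \models \phi$. The second half (soundness: every $\psi \in D(\phi)$ is implied by $\phi$ at every position) is the routine direction and follows immediately from the inductive definition of $D$, since each clause only strengthens or rearranges $\phi$ into one of its ``disjunction-free branches''. The first half (completeness) is where the real content lies.

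The key steps, in order, are: (i) establish the soundness direction $\bigvee_{\psi \in D(\phi)} \psi \Rightarrow \phi$ pointwise by a quick induction; (ii) establish the completeness direction $\phi \Rightarrow \bigvee_{\psi \in D(\phi)} \psi$ pointwise by induction, handling each syntactic case: atoms are trivial, the $\vee$ case picks the satisfied disjunct, the $\X$ and $\F$ cases push the position forward (for $\F$, to the witnessing position $i' \in [i,\ell]$) and invoke the hypothesis there, and the $\wedge$ case uses that if $w,i \models \phi_1$ and $w,i \models \phi_2$ then by induction we get $\psi_1 \in D(\phi_1)$ and $\psi_2 \in D(\phi_2)$ each satisfied at $i$, so $\psi_1 \wedge \psi_2 \in D(\phi)$ is satisfied at $i$; (iii) combine the two directions at $i = 1$ to conclude the lemma. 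Given a separating $\phi$, applying completeness to $u$ (with $i=1$) yields a single $\psi \in D(\phi)$ with $u \models \psi$, and applying soundness to each $v_j$ shows $v_j \models \psi$ would force $v_j \models \phi$, contradicting separation; hence $v_j \not\models \psi$ for all $j$, so $\psi$ separates.

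The main obstacle is correctly choosing the strengthened pointwise invariant and keeping the two implications straight, since the lemma is genuinely one-directional at the level of a fixed $\psi$: a single $\psi \in D(\phi)$ need not be equivalent to $\phi$, only implied by it and jointly (over all of $D(\phi)$) covering it. I would therefore phrase the induction as the pair of pointwise implications above rather than as an equivalence, and take care that the $\wedge$ case genuinely needs completeness on \emph{both} subformulas simultaneously at the same position $i$ — this is why a single positive word is essential and why the statement does not generalise to multiple positive words. Everything else is a routine case check following the semantics of $\X$ and $\F$ given in the preliminaries.
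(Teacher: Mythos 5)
Your proof is correct, but it takes a different route from the paper's. You establish the stronger semantic fact that $D(\phi)$ is a disjunctive decomposition of $\phi$, namely that for every word $w$ and position $i$ one has $w,i \models \phi$ if and only if $w,i \models \psi$ for some $\psi \in D(\phi)$, proved as two pointwise implications by induction; the lemma then falls out by choosing, via completeness, a disjunct $\psi$ satisfied by $u$, and using soundness to rule out $v_j \models \psi$ for every $j$. The paper instead inducts directly on the property ``separates $u$ from the given negative words'': in the $\wedge$ case it partitions the negative words into those killed by $\phi_1$ and those killed by $\phi_2$, applies the induction hypothesis to each part, and conjoins the results; the $\X$ and $\F$ cases are dispatched by (implicitly) passing to suffixes. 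Your version buys a cleaner treatment of $\wedge$ (no splitting of the negative set) and of $\F$ (the witnessing position is handled by the pointwise quantification rather than by re-instantiating the lemma on suffix words), at the cost of proving the soundness direction, which the paper never needs. One small quibble with your commentary rather than your proof: the single-positive-word hypothesis is not consumed by the $\wedge$ case (your completeness implication is a per-word statement and holds as is), but by the final combination step --- with several positive words, different $u_j$ may satisfy different elements of $D(\phi)$, and no single disjunct need work for all of them. Likewise, the pointwise strengthening to arbitrary positions $i$ is what makes the $\X$ and $\F$ cases go through; the $\wedge$ case would already work at position $1$.
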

\begin{proof}
We proceed by induction on $\phi$.
\begin{itemize}
    \item If $\phi = c$ this is clear.
    \item If $\phi = \phi_1 \wedge \phi_2$ then $D(\phi) = \set{\psi_1 \wedge \psi_2 : \psi_1 \in D(\phi_1), \psi_2 \in D(\phi_2)}$.
    Since $\phi$ separates $u$ from $v_1,\dots,v_n$, there exists $I_1,I_2 \subseteq [1,n]$ 
    such that $I_1 \cup I_2 = [1,n]$, 
    $\phi_1$ separates $u$ from $\set{v_i : i \in I_1}$, and 
    $\phi_2$ separates $u$ from $\set{v_i : i \in I_2}$.
    By induction hypothesis applied to both $\phi_1$ and $\phi_2$ 
    there exists $\psi_1 \in D(\phi_1)$ separating $u$ from $\set{v_i : i \in I_1}$
    and $\psi_2 \in D(\phi_2)$ separating $u$ from $\set{v_i : i \in I_2}$.
    It follows that $\psi_1 \wedge \psi_2$ separates $u$ from $v_1,\dots,v_n$,
    and $\psi_1 \wedge \psi_2 \in D(\phi)$.  
    \item If $\phi = \phi_1 \vee \phi_2$ then $D(\phi) = D(\phi_1) \cup D(\phi_2)$.
    Since $\phi$ separates $u$ from $v_1,\dots,v_n$, 
    either $\phi_1$ or $\phi_2$ does as well; 
    without loss of generality let us say that $\phi_1$ separates $u$ from $v_1,\dots,v_n$.
    The induction hypothesis implies that $\psi_1 \in D(\phi_1)$ separates $u$ from $v_1,\dots,v_n$,
    and $\psi_1 \in D(\phi)$.
    \item The cases $\phi = \X \phi'$ and $\phi = \F \phi'$ follow directly by induction hypothesis.
\end{itemize}
\end{proof}

We now prove Theorem~\ref{thm:hardness_F_X_and_or}.

\begin{proof}
Let $u,v_1,\dots,v_{n+1}$ the words constructed by the reduction.
We claim that if there exists a formula in $\LTL(\F, \X, \wedge, \vee)$ separating $u$ from $v_1,\dots,v_{n+1}$,
then there exists a formula in $\LTL(\X,\wedge)$ separating $u$ from $v_1,\dots,v_{n+1}$ of size smaller than or equal to the original formula.
The proof goes in two steps: 
\begin{itemize}
	\item from $\LTL(\F,\X,\wedge,\vee)$ to $\LTL(\F,\X,\wedge)$;
	\item from $\LTL(\F,\X,\wedge)$ to $\LTL(\X,\wedge)$.
\end{itemize}

Let $\phi \in \LTL(\F,\X,\wedge,\vee)$ separating $u$ from $v_1,\dots,v_{n+1}$. Thanks to Lemma~\ref{lem:remove_or}
there exists $\psi \in D(\phi)$ separating $u$ from $v_1,\dots,v_{n+1}$.
Note that all formulas in $D(\phi)$ are smaller than or equal to $\phi$,
which finishes the proof of the first claim.

Let $\phi \in \LTL(\F,\X,\wedge)$, we define $[\phi] \in \LTL(\X,\wedge)$ by induction:
\begin{itemize}
    \item If $\phi = a$ then $[\phi] = a$.
    \item If $\phi = \phi_1 \wedge \phi_2$ then $[\phi] = [\phi_1] \wedge [\phi_2]$.
    \item If $\phi = \X \phi'$ then $[\phi] = \X [\phi']$.
    \item If $\phi = \F \phi'$ then $[\phi] = [\phi']$.
\end{itemize}

We claim that if $\phi$ separates $u$ from $v_1,\dots,v_{n+1}$, then $[\phi]$ separates $u$ from $v_1,\dots,v_{n+1}$.
To prove this we will establish 3 properties.

\begin{enumerate}
    \item For every word $w$, $w \models [\phi]$ implies $w \models  \phi$.
    \item Let $i \in [2,\ell + 1]$ and $i' \in [1,i-1]$.
    If $u,i \models \phi$ then $v_{n+1},i' \models \phi$.
    \item If $u \models \phi$ and $v_{n+1} \not \models \phi$,
    then $u \models [\phi]$.
\end{enumerate}

Here are the proofs of these three properties.

\begin{enumerate}
    \item By induction on $\phi$, we prove that $w, i \models [\phi]$ implies $w, i \models \phi$.
    \begin{itemize}
        \item If $\phi = a$ then $[\phi] = a$ so the property is trivial.
        \item If $\phi = \phi_1 \wedge \phi_2$ then     $[\phi] = [\phi_1] \wedge [\phi_2]$ so the property follows by induction hypothesis.
        \item If $\phi = \X \phi'$ then $[\phi] = \X [\phi']$ so the property follows by induction hypothesis.
        \item If $\phi = \F \phi'$ then $[\phi] = [\phi']$.
        Assume $w, i \models [\phi]$, meaning $w, i \models [\phi']$. By induction hypothesis this implies that
        $w, i \models \phi'$.
        Now this implies that $w, i \models \F \phi'$ (choose $i' = i$ in the definition of the semantics of $\F$).
    \end{itemize}

    \item Recall that $u = a^{\ell + 1}$ and $v_{n+1} = a^\ell b$.
    By induction on $\phi$, we prove that for all $i \in [2,\ell + 1]$ and $i' \in [1,i-1]$, 
    $u,i \models \phi$ implies $v_{n+1},i' \models \phi$.
    \begin{itemize}
        \item If $\phi \in \set{a,b}$, since $u,i \models \phi$ necessarily $\phi = a$, so $v_{n+1},i' \models \phi$ 
        (indeed $i' \le \ell$ so $v_{n+1}(i') = a$).
        \item If $\phi = \phi_1 \wedge \phi_2$ the property follows by induction hypothesis.
        \item If $\phi = \X \phi'$, we have
        $u,i \models \phi$ if $i+1 \le \ell+1$ and $u,i+1 \models \phi'$.
        By induction hypothesis $v_{n+1},i'+1 \models \phi'$,
        implying that $v_{n+1},i' \models \X \phi' = \phi$.
        \item If $\phi = \F \phi'$, we have
        $u,i \models \phi$ if there exists $i' \in [i,\ell+1]$ such that $u,i' \models \phi'$.
        By induction hypothesis $v_{n+1},i'-1 \models \phi'$,
        with $i'-1 \in [i-1,\ell]$, implying that for $i'' \in [1,i-1]$ we have $v_{n+1},i'' \models \F \phi'$, so $v_{n+1},i'' \models \phi$.
    \end{itemize}
    
    \item By induction on $\phi$, we prove that for all $i \in [1,\ell + 1]$, 
    if $u,i \models \phi$ and $v_{n+1},i \not \models \phi$, then $u,i \models [\phi]$.
    \begin{itemize}
        \item If $\phi \in \{a,b\}$, then $[\phi] = \phi$ so the property is trivial.
        \item If $\phi = \phi_1 \wedge \phi_2$ the property follows by induction hypothesis.
        Indeed, since $u,i \models \phi$ then $u,i \models \phi_1$ and $u,i \models \phi_2$.
        Since $v_{n+1},i \not \models \phi$ then either $v_{n+1},i \not \models \phi_1$ or $v_{n+1},i \not \models \phi_2$.
        Let us consider the first case, the other being symmetric: $v_{n+1},i \not \models \phi_1$.
        By induction hypothesis to $\phi_1$ we get that 
        $v_{n+1}, i \not \models [\phi_1]$.
        Since $[\phi] = [\phi_1] \wedge [\phi_2]$
        this implies that $v_{n+1}, i \not \models [\phi]$.
        \item If $\phi = \X \phi'$ the property follows by induction hypothesis.
        \item If $\phi = \F \phi'$, then $[\phi] = [\phi']$.
        Since $u,i \models \phi$, there exists $i' \in [i,\ell + 1]$ such that $u,i' \models \phi'$.
        The second property implies that necessarily $i' = i$: indeed if $i' > i$ we would have $v_{n+1},i \models \phi'$,
        implying that $v_{n+1},i \models \phi$.
        It follows that $u,i \models \phi'$.
        Since $v_{n+1},i \not \models \phi$ in particular
        $v_{n+1},i \not \models \phi'$.
        By induction hypothesis this implies that $u,i \models [\phi']$, equivalently $u,i \models [\phi]$.
    \end{itemize}
\end{enumerate}

Thanks to these three properties we can show that if $\phi$ separates $u$ from $v_1,\dots,v_{n+1}$, then $[\phi]$ separates $u$ from $v_1,\dots,v_{n+1}$.
Since for each $j \in [1,n+1]$, we have $v_j \not\models \phi$, the first property implies that $v_j \not\models [\phi]$.
Since $u \models \phi$ and $v_{n+1} \not \models \phi$, the third property implies that $u \models [\phi]$.
\end{proof}

\newpage
\section{Dual results and open problems}
\label{sec:conclusions}
Towards stating the remaining most interesting open problems, let us first give an easy dualisation argument.
We define the duals as follows:
\[
\overline{a} = \neg a \qquad 
\overline{\X} = \X \qquad 
\overline{\F} = \G \qquad 
\overline{\G} = \F \qquad 
\overline{\wedge} = \vee \qquad 
\overline{\vee} = \wedge.
\]
For a formula $\phi$ we write $\overline{\phi}$ the formula obtained from $\phi$ by applying $\overline{\cdot}$ inductively.
Clearly, $u \models \phi$ if and only if $u \not\models \overline{\phi}$.
Consequently, $\phi$ separates $u_1,\dots,u_n$ from $v_1,\dots,v_n$ if and only if $\overline{\phi}$ separates $v_1,\dots,v_n$
from $u_1,\dots,u_n$.
Using this duality, $\LTL(\X,\wedge)$ becomes $\LTL(\X,\vee)$, $\LTL(\F,\wedge)$ becomes $\LTL(\G,\vee)$,
and $\LTL(\F,\X,\wedge,\vee)$ becomes $\LTL(\G,\X,\wedge,\vee)$.
Accordingly, all results we obtained for the three fragments apply to their duals.

\vskip1em
We have shown in Section~\ref{sec:F_and} that there is no polynomial time $(1 - o(1)) \cdot \log(n)$-approximation algorithm,
and constructed an (exponential in the number of words) $n$-approximation algorithm.
\begin{open}
Does there exist a polynomial time $O(\log(n))$-approximation algorithm for learning $\LTL(\F,\wedge)$?
\end{open}

We have proved that the learning problem is $\NP$-complete for the fragments 
$\LTL(\X,\wedge), \LTL(\F,\wedge)$, $\LTL(\F,\X,\wedge,\vee)$, and their duals.
The reduction used for proving the last result does not extend to full $\LTL$ (indeed $\G a$ separates $u$ from $v_1,\dots,v_{n+1}$).

\begin{open}
Is the learning problem $\NP$-complete for full $\LTL$?
\end{open}

\section*{Acknowledgments}
We thank Daniel Neider for introducing us to this fascinating problem.

\bibliographystyle{alpha}
\bibliography{bib}

\end{document}